\documentclass[psamsfonts]{amsart}
\usepackage{graphicx}

\usepackage{amssymb}
 \usepackage{mathtools}
 \usepackage{pifont}
 \usepackage{multicol}
\usepackage{mathrsfs}
\usepackage{color}
\usepackage{multicol}
\usepackage{amsthm}
\usepackage{float}
\usepackage{tikz}
\usepackage{tikz-cd}
\usetikzlibrary{arrows}
\usetikzlibrary{shapes.arrows}   
\usetikzlibrary{positioning}
\usepackage{mathrsfs}
\usepackage{amssymb}
\usepackage{mathtools}
\usepackage{chessfss}

\DeclarePairedDelimiter\floor{\lfloor}{\rfloor}
\setlength{\columnsep}{-2cm}
\DeclarePairedDelimiter{\ceil}{\lceil}{\rceil}
\DeclarePairedDelimiter{\abs}{\lvert}{\rvert}
\definecolor{1}{rgb}{1,0.2,0.3}
\definecolor{2}{rgb}{0.1,0.3,0.5}
\definecolor{3}{rgb}{1,1,0}
\definecolor{4}{rgb}{255,255,255}
\newtheorem{theorem}{Theorem}
\newtheorem{corollary}[theorem]{Corollary}
\newtheorem{lemma}[theorem]{Lemma}

\theoremstyle{definition}

\theoremstyle{remark}

\newtheorem{question}{Open Question}

\newcommand{\psat}{\ensuremath{\mathrm{P3SAT}_{\overline{\underline{3}}}}}

\begin{document}

\tikzset
{
  x=.23in,
  y=.23in,
}

\title{Art gallery problem with rook and queen vision}
\author{Hannah Alpert and \'Erika Rold\'an}
\address{The Ohio State University, 231 W 18th Ave, Columbus, OH 43202}
 \subjclass[2010]{03D15, 05B40, 05B50, 00A08}
 
\date{\today}
\maketitle

\begin{abstract}
How many chess rooks or queens does it take to guard all the squares of a given polyomino, the union of square tiles from a square grid?  This question is a version of the art gallery problem in which the guards can ``see'' whichever squares the rook or queen attacks.  We show that $\floor{\frac{n}{2}}$ rooks or $\floor{\frac{n}{3}}$ queens are sufficient and sometimes necessary to guard a polyomino with $n$ tiles.  We also prove that finding the minimum number of rooks or the minimum number of queens needed to guard a polyomino is NP-hard.  These results also apply to $d$--dimensional rooks and queens on $d$--dimensional polycubes.  We also use bipartite matching theorems to describe sets of non-attacking rooks on polyominoes.
\end{abstract}
\smallskip
\noindent \textbf{Keywords.} Art Gallery Theorem, NP-hardness, Polyomino, Computational Geometry, Chessboard Complex, Visibility Coverage, Guard Number, Domination Problem, N-Queens Problem
\section{Introduction}\label{sec:introduction}

Perhaps the most famous mathematical problem involving chess pieces is the eight-queens problem from the mid-1800s, which asked for all the ways to place eight mutually non-attacking queens on a standard chessboard; generalizations of this problem have been popular research problems ever since that time~\cite{bell2009survey}.  In particular, a recent NP-hardness result addresses the problem of when an initial placement of queens can be extended~\cite{gent2017complexity}.  (In Theorem~\ref{teo:extend} we show that the analogous problem for rooks instead of queens is polynomial-time solvable, even on non-rectangular boards.)

In our paper instead of placing chess pieces on rectangular chessboards, we use arbitrary polyominoes as our boards.  A \textbf{\textit{polyomino}} is the union of finitely many squares from the infinite chessboard, that is, from the standard tiling of the plane by unit squares, such that the interior of the polyomino is connected.  We refer to the squares as tiles.  Similarly, in $d$ dimensions, a \textbf{\textit{$d$--polycube}} is the union of finitely many $d$--cubes (tiles) from the standard tiling of $\mathbb{R}^d$ by unit $d$--cubes, such that the interior of the polycube is connected.

In this paper we place chess rooks and queens on the tiles to guard the polyomino or polycube.  (One could also consider the same questions for other chess pieces, but rooks and queens share the ability to attack along rows and columns.)  Roughly, we say that our rook or queen's line of attack ends when it crosses outside the polycube.  More precisely, we imagine the $d$--cubes centered at the points of $\mathbb{Z}^d$.  Suppose that we have a rook at $(0, \ldots, 0)$.  For each point that has all coordinates $0$ except for one coordinate $\pm 1$, we say that the $d$--dimensional rook \textbf{\textit{guards}} or \textbf{\textit{attacks}} the tiles with coordinates given by all natural-number multiples of this point such that all the smaller natural-number multiples are tiles of our $d$--polycube.  Similarly, suppose that we have a queen at $(0, \ldots, 0)$.  For each point that has all coordinates equal to $0$ or $\pm 1$, we say that the $d$--dimensional queen guards or attacks the tiles with coordinates given by all natural-number multiples of this point such that all the smaller natural-number multiples are tiles of our $d$--polycube.

This paper addresses the problem of finding the minimum number of rooks or queens needed to guard all the tiles of a given polyomino.  A set of rooks or queens that guards all the tiles of a polyomino is also called a \textbf{\textit{dominating}} set.  The question is similar to the famous art gallery problem of Chv\'atal, in which $\floor{\frac{n}{3}}$ guards are sufficient to see all points in an $n$--vertex polygon, and some polygons need this many guards~\cite{Chvatal75}.  The art gallery problem for polyominoes was introduced in~\cite{Biedl12}, where those authors show that $\floor{\frac{n+1}{3}}$ guards are sufficient and sometimes necessary to see all points in an $n$--tile polyomino.  Different notions of vision are possible in art gallery problems.  Classically the guards are points that can see along line segments in all directions; in our setup, the guards use ``rook vision'' that sees horizontally and vertically from all the points of a given tile, or ``queen vision'' that also sees diagonally.  Our Theorems~\ref{teo:minimum-rook} and~\ref{teo:minimum-queen} show that $\floor{\frac{n}{2}}$ rook guards, or $\floor{\frac{n}{3}}$ queen guards, are sufficient and sometimes necessary to guard an $n$--tile polyomino with rook vision.  For square chessboards, asymptotic results are known on the number of queens needed to dominate~\cite{Grinstead90}.

The computational complexity of finding the minimum number of guards for a given art gallery has been studied for many variations on the original art gallery setup.  The original version is NP-hard~\cite{Lee86}, and so is the problem on polyominoes with standard vision, even if the polyomino is required to be bounded by a simple closed curve (i.e., it has no holes, or is simply connected)~\cite{Biedl11}.  

Another notion of vision that is useful for computation on polyominoes is \textbf{\textit{$r$--visibility}}, in which one point can see another if and only if the rectangle aligned with the axes with those two points as opposite corners is completely contained within the polyomino.  The problem of finding the minimum number of $r$--visibility guards for a given polyomino is polynomial-time solvable if the polyomino is required not to have holes~\cite{Worman07}.  However, Iwamoto and Kume show that $r$--visibility guard set problem on polyominoes with holes is NP-hard~\cite{Iwamoto14}; we adapt their proof to prove Theorems~\ref{teo:nphard-rook} and~\ref{teo:nphard-queen}, which say that finding the minimum number of rooks or queens to guard a given polyomino is NP-hard.

Theorem~\ref{teo:nphard-rook}, about the NP-hardness of the art gallery problem for rook vision, should be contrasted with a very similar art gallery problem.  In the paper~\cite{Ntafos86} of Ntafos (described in the book~\cite{ORourke87}), a grid consists of a finite union of horizontal and vertical line segments.  With the usual notion of vision, any guards must then see horizontally and vertically as rooks do, but the problem of finding the minimum number of guards is polynomial-time solvable~\cite{Ntafos86}.  This is because the two problems are slightly different.  For example, a 2 by 3 rectangle polyomino can be guarded by two rooks, but the corresponding grid of 2 horizontal and 3 vertical segments requires three guards, because every point on every segment must be guarded.  The grid art gallery problem in dimension at least 3, though, is NP-hard~\cite{Ntafos86}.

\subsection{Our results}

Our first pair of results are analogous to the result in~\cite{Biedl12} that gives the minimum number of guards needed to see all of an $n$--tile polyomino.  The difference is that our guards are rooks or queens rather than points with standard vision.


\begin{theorem}\label{teo:minimum-rook}
In any dimension $d$, the number of $d$--dimensional rooks that are sufficient and sometimes necessary to guard a $d$--polycube with $n$ tiles is $\min\left\{1, \floor{\frac{n}{2}} \right\}$.
\end{theorem}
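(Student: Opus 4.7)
The plan is to prove the theorem by establishing a matching upper bound and lower bound, then extending both to general dimension $d$.

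For the \textbf{upper bound}, I will use a parity coloring argument. Color each unit $d$-cube of $\mathbb{Z}^d$ by the parity of the sum of its coordinates, giving a higher-dimensional checkerboard. The $n$ tiles of the polycube $P$ split into two color classes whose sizes differ by at most one, so the smaller class has at most $\lfloor n/2 \rfloor$ tiles. Place a rook on every tile of the smaller class. Each such rook guards its own tile, and every tile of the larger class has a lattice neighbor of opposite parity inside $P$ (using connectedness of $P$ and $n \geq 2$); that neighbor carries a rook, which attacks the tile through the one-step row or column segment between them. The resulting set of at most $\lfloor n/2 \rfloor$ rooks is thus dominating. The case $n = 1$ is handled separately by placing one rook on the unique tile.

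For the \textbf{lower bound}, I will construct for each $n$ an $n$-tile polyomino whose rook domination number equals $\lfloor n/2 \rfloor$. The critical tool is a \emph{rook-independent witness set} $W \subseteq P$: a set of tiles such that no rook placed on any tile of $P$ attacks two elements of $W$. Any dominating set of $P$ must then contain at least $|W|$ rooks, because each rook covers at most one witness. The target is therefore to exhibit $P$ together with such a $W$ of size $\lfloor n/2 \rfloor$. A natural candidate is a zig-zag staircase or chain of S-tetromino-type building blocks in which every row and column meets few tiles; the witnesses are placed at the ``corners'' of the shape, and one checks that for every pair of witnesses $(x_1,y_1),(x_2,y_2)$ the two cross cells $(x_1,y_2),(x_2,y_1)$ either lie outside $P$ or, if present, fail to connect to both witnesses along axis-aligned runs in $P$.

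The main obstacle will be the explicit lower-bound construction and its verification. Keeping the polyomino connected requires connector tiles between building blocks, and each connector is a candidate rook vantage point whose row and column runs could attack two witnesses at once; hence the geometry of $P$ and the choice of $W$ must be designed in concert, and rook-independence of $W$ must be checked pair by pair against the connected runs in $P$. For the $d$-dimensional version of the theorem both bounds transfer: the checkerboard argument above is dimension-independent, and any $2$-dimensional lower bound example embeds into $\mathbb{R}^d$ as a flat $2$-dimensional slice, which is a $d$-polycube with the same rook domination number, since a $d$-dimensional rook attacks only along coordinate axes and any attack line leaving the slice exits the polyomino immediately.
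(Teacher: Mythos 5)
Your upper bound is complete and is exactly the paper's argument: checkerboard-color the tiles by the parity of the coordinate sum, place rooks on the smaller class, and use connectedness of the interior to see that for $n\ge 2$ every tile shares a $(d-1)$-face with another tile and hence with a guarded parity class. The reduction of the $d$-dimensional lower bound to a planar example sitting in a single $2$-dimensional slice is also fine and matches the paper.

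The gap is the lower-bound construction, which you explicitly leave open, and the candidate you name would not close it. A genuine zig-zag staircase (a chain of S-tetromino-type blocks drifting diagonally, say tiles $(i,i)$ and $(i+1,i)$ for $i=0,\dots,k$) fails your witness test: the inner-corner tile $(i+1,i)$ attacks both $(i,i)$ along its row and $(i+1,i+1)$ along its column, and indeed every rook on that shape guards at most three tiles, so the staircase only forces about $n/3$ rooks. The construction that works --- the paper's --- does not drift: stack $m$ horizontal dominoes so that all of them share one common center column, with the free tile of each domino alternating between the left and the right of that column (for odd $n$, append one extra tile to the bottom of the center column). The $\lfloor n/2\rfloor$ off-column tiles are your witness set $W$: two of them on the same side sit in the same column of the infinite grid but with a missing tile between them, so they are not in the same column of $P$ in the rook-attack sense; each one is therefore alone in its column of $P$, and its row of $P$ contains only itself and one center-column tile, so no rook guards two witnesses. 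No connector tiles are needed, so the difficulty you flag about connectors does not arise. With that explicit polyomino and witness set filled in, your plan becomes the paper's proof.
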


\begin{theorem}\label{teo:minimum-queen}
In any dimension $d$, the number of $d$--dimensional queens that are sufficient and sometimes necessary to guard a $d$--polycube with $n$ tiles is  $\min\left\{1, \floor{\frac{n}{3}} \right\}$.
\end{theorem}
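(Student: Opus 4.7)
The plan is to establish the two halves of the theorem separately---sufficiency of $\lfloor n/3\rfloor$ queens and the existence of $d$-polycubes attaining this bound---on the model of the companion result for rooks (Theorem~\ref{teo:minimum-rook}), with each queen now expected to cover three tiles rather than two.

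For sufficiency, I would argue by strong induction on $n$. The base cases are $n\leq 5$: I claim that any $d$-polycube with at most $5$ tiles is guarded by a single queen. This reduces to a finite verification, since for each small $n$ there are only finitely many polycubes up to ambient isometry, and in each a queen placed at a tile of high degree in the tile-adjacency graph attacks the few remaining tiles via its row, column, or diagonals. Higher dimensions only help at this stage, since the queen gains extra directions of attack.

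For the inductive step $n\geq 6$, the goal is to locate a triple of tiles $\{t_1,t_2,t_3\}$ such that (a) a queen at $t_1$ attacks $t_2$ and $t_3$---all three lying on a common row, column, or diagonal, or forming an $L$ with corner $t_1$---and (b) the polycube obtained by deleting $\{t_1,t_2,t_3\}$ is still connected, or at least splits only into components each of size at least $3$. Applying the inductive hypothesis to the remainder yields at most $\lfloor(n-3)/3\rfloor=\lfloor n/3\rfloor-1$ queens, to which we add one for $t_1$; here we use the standard inequality $\sum \lfloor a_i/3\rfloor \leq \lfloor \sum a_i/3\rfloor$ to close the induction when the remainder is disconnected. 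I would locate such a triple by starting at the lex-smallest tile $t_0$ and examining its local neighborhood. The main obstacle is the degenerate scenario in which every boundary triple either fails to be queen-coverable or, upon removal, leaves behind small components of size $1$ or $2$, each of which would demand a full queen and break the $\lfloor n/3\rfloor$ accounting; I would handle this by a local case analysis, either choosing the triple to keep all residual components large enough, or occasionally placing a queen whose attack set covers four or five tiles to absorb the slack.

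For necessity, I would construct, for each $n\geq 3$, an $n$-tile planar polyomino requiring $\lfloor n/3\rfloor$ queens, and embed it in a two-dimensional slice of $\mathbb{R}^d$ to extend the lower bound to all dimensions. The construction chains $\lfloor n/3\rfloor$ three-tile ``rooms'' together by thin, bending corridors routed so that no row, column, or diagonal passing through tiles in two different rooms lies entirely inside the polyomino. Each room is thus isolated from every other under queen vision, forcing at least one queen per room; any residue $n\bmod 3$ is absorbed by extra corridor tiles attached to one of the rooms, so the total tile count is exactly $n$ while the guard requirement is exactly $\lfloor n/3\rfloor$.
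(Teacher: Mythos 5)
Your proposal has genuine gaps in both directions, and in both cases the gap sits exactly where the paper's argument does its real work. For sufficiency, your induction hinges on always being able to find a queen-coverable triple $\{t_1,t_2,t_3\}$ whose removal leaves only components of size at least $3$; you flag this yourself as ``the main obstacle'' and defer it to ``a local case analysis'' that is never carried out. That case analysis is not a routine cleanup: branching tiles, combs of short teeth, and the explosion of local configurations in dimension $d$ make it unclear that such a triple always exists, and no bound on the number of cases is given. The paper avoids this entirely with a global, non-inductive argument: first a lemma showing that a queen guards every tile within $\ell^1$--distance $2$ of it, then a BFS labelling of all tiles by distance to a root (chosen, when possible, to be a leaf of the adjacency graph), a partition into the three residue classes mod $3$, and placement of queens on the smallest class. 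The only delicate point is the residue-$2$ class near the root, which is handled by a short parity argument. That approach is uniform in $d$ and needs no surgery on the polycube.

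For necessity, your construction does not fit in the tile budget. If you build $\lfloor n/3\rfloor$ pairwise-isolated three-tile rooms, those rooms already use up $3\lfloor n/3\rfloor$ tiles, which is $n - (n \bmod 3)$; any nonempty connecting corridors then push the total strictly past $n$, and the residue of at most $2$ tiles cannot pay for $\lfloor n/3\rfloor - 1$ corridors. The paper's example avoids this by making the ``rooms'' themselves form the connected structure: $m$ horizontal rows of three tiles stacked so that they share a single center column, alternating left and right. The lower bound is then not ``one queen per isolated room'' but the sharper observation that the $m$ extremal tiles (leftmost or rightmost of each row) are pairwise not guardable by a common queen, which forces $m = \lfloor n/3\rfloor$ queens with no extra tiles spent on separation. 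To repair your version you would have to let the rooms double as the corridors, at which point you have essentially rediscovered the paper's ladder.
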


Our second pair of results are analogous to the result in~\cite{Iwamoto14} that shows that finding the minimum number of $r$--visibility guards for a polyomino is NP-hard.  Instead of using $r$--visibility we adapt the proof for rook vision and queen vision.  Formally, we say that an instance of the rook-visibility guard set problem for polyominoes is a pair $(P,m)$ where $P$ is a polyomino and $m$ is a positive integer. The problem asks whether there exists a set of $m$ rooks placed in $P$ which guard all tiles of $P$.  A similar definition applies to $d$--polycubes for any dimension $d$; proving that the rook-visibility guard set problem for polyominoes is NP-hard immediately implies the result for $d$--polycubes for any $d$, because we can restrict to those polycubes that stay in a single $2$--dimensional layer.  Likewise, we can define the queen-visibility guard set problem for polyominoes and for $d$--polycubes for any $d$.

\begin{theorem}\label{teo:nphard-rook}
The rook-visibility guard set problem for polyominoes is NP-hard, and thus the rook-visibility guard set problem for $d$--polycubes is NP-hard for any dimension $d$.
\end{theorem}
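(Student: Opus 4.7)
The plan is to reduce from the restricted planar 3-SAT variant \psat{} (in which each variable appears in at most three clauses, which is known to be NP-complete) to the rook-visibility guard set problem, following the blueprint of Iwamoto and Kume but redesigning the gadgets so that visibility is rook-style rather than $r$-visibility. Given a \psat{} instance $\varphi$ with $n$ variables $x_1,\dots,x_n$ and $m$ clauses $C_1,\dots,C_m$, I will construct in polynomial time a polyomino $P_\varphi$ together with an integer threshold $k = k(n,m)$ such that $\varphi$ is satisfiable if and only if $P_\varphi$ can be dominated by $k$ rooks. The polyomino will be laid out in the plane following a planar embedding of the variable-clause graph of $\varphi$, with three types of gadgets: \emph{variable gadgets}, \emph{wire gadgets}, and \emph{clause gadgets}, all connected through narrow corridors that the rooks' straight-line vision cannot escape.

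First I would construct the variable gadget as a small polyomino piece containing a handful of tiles arranged so that any optimal dominating set must include exactly one of two ``canonical'' rook positions; these two choices encode the truth assignment of the variable. The key design feature is a pair of isolated tiles whose only rook-attack neighbors are the two canonical positions, forcing exactly one of them to be chosen. Second, the wire gadgets will be long horizontal/vertical corridors of width one, with periodic ``bumps'' and carefully placed one-tile holes so that the corridor's own domination cost is fixed regardless of the polarity propagating through it, but the \emph{free} guard at the endpoint near the clause is determined by the variable's choice. Because rook vision stops at holes and past corners, holes can be used to prevent a rook in one gadget from accidentally guarding tiles in another, which is crucial for keeping the gadgets independent. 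Third, the clause gadget will be a small configuration of tiles, with three designated ``incoming'' tiles (one per literal), such that the clause gadget requires an extra rook unless at least one of the three incoming tiles is already attacked by a rook placed by a ``true'' literal's wire. Setting $k$ to be the sum of the minimum costs of all individual variable and wire gadgets (ignoring the clause extras) then makes $k$ rooks suffice exactly when every clause is satisfied.

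The soundness direction amounts to showing: if a dominating set of size $\leq k$ exists, then by a local exchange argument each variable and wire gadget is dominated in one of its canonical ways (swapping any ``mixed'' placement for a canonical one cannot increase the total count), and each clause gadget is necessarily reached by at least one incoming true-literal wire, yielding a satisfying assignment. The completeness direction is the routine verification that a satisfying assignment produces a rook placement of size exactly $k$. Finally, to extend to $d$-polycubes for $d\geq 2$ one simply embeds $P_\varphi$ inside a single $xy$-layer of $\mathbb{R}^d$; since no tile exists outside this layer, a $d$-dimensional rook's extra attack directions are wasted, so the reduction transfers verbatim.

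The main obstacle is the gadget design itself: classical art-gallery NP-hardness reductions rely on the fact that a guard's vision can be finely controlled by adding small concavities, but rook vision is much coarser, since a rook at any tile in a row/column of length $\ell$ sees that entire row/column up to the first gap. Thus the gadgets must be engineered so that each ``choice'' of rook placement really does translate into the intended truth value and cannot produce unintended long-range visibility shortcuts between distant gadgets. The careful introduction of one-tile holes to block rook lines of sight between gadgets, while keeping the polyomino connected in the combinatorial sense required by our definition, is the delicate part of the construction that I expect to consume most of the technical work.
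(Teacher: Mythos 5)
Your overall strategy is exactly the paper's: reduce from \psat{}, replace variables, clauses, and edges by gadgets laid out along a planar embedding of the incidence graph, and choose the budget so that the clause gadgets get guarded ``for free'' precisely when the assignment read off the variable gadgets is satisfying; the extension to $d$--polycubes by confining everything to a single layer is also the paper's argument. So there is no disagreement about the route.

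The genuine gap is that the entire content of the theorem lives in the gadgets, and you never exhibit them; you only list the properties you would want them to have, and you yourself flag the gadget design as the main obstacle. Concretely, what is missing is (i) an explicit variable gadget together with a rigidity lemma: the paper uses a $14$--tile piece with one positive port $A$ and two negative ports $B$, $C$ (matching the \psat{} guarantee that each variable occurs exactly once positively and exactly twice negatively --- your paraphrase ``appears in at most three clauses'' is not the right restriction), and proves that any four-rook dominating set of it either has a rook at $A$ or has rooks at both $B$ and $C$ but never mixes the two, and that three rooks never suffice; (ii) a precise budget: the paper takes $m = 4n + \ell/2$, where $\ell$ is the total number of bends of the width-one connection corridors, using that each corridor has an even number of bends and costs exactly half of them once one end is pre-guarded, and that a ``hot'' corridor can deliver a terminal rook onto the clause gadget only by borrowing the variable gadget's port rook to its first bend; and (iii) the polynomial-size layout, which the paper obtains from a visibility representation of the planar graph. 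One mechanism in your sketch would not work as stated: a ``one-tile hole'' cannot be placed inside a width-one corridor without disconnecting the polyomino, and no holes are needed to stop rook vision there --- a bend in a width-one corridor already terminates every row and column passing through it, which is exactly what the paper exploits. Until a concrete variable gadget with this exchange/rigidity property is written down and verified, the reduction is a plan rather than a proof.
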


\begin{theorem}\label{teo:nphard-queen}
The queen-visibility guard set problem for polyominoes is NP-hard, and thus the queen-visibility guard set problem for $d$--polycubes is NP-hard for any dimension $d$.
\end{theorem}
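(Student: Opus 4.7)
The plan is to adapt the construction used in the proof of Theorem~\ref{teo:nphard-rook}, which is itself an adaptation of the Iwamoto--Kume reduction from $\psat$ to $r$--visibility guard sets. That reduction produces, from a $\psat$ instance $\phi$ with $V$ variables and $C$ clauses, a polyomino $P_\phi$ (with holes) together with a threshold $m$ depending only on $V$ and $C$, so that $\phi$ is satisfiable if and only if $P_\phi$ admits a rook guard set of size $m$. My task is to produce a polyomino $P'_\phi$ (and threshold $m'$) for which the analogous equivalence holds with queens in place of rooks.

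First I would reuse the same skeleton: variable gadgets that admit exactly two minimum-size guard placements (corresponding to the two truth values of the variable), wire gadgets that propagate those choices, and clause gadgets that are fully covered by a minimum-size placement iff at least one incoming wire carries a ``true'' signal. Since rooks and queens agree on horizontal and vertical rays, the logical role of each gadget transfers without change, so the real issue is to ensure that the extra diagonal vision of a queen neither lets one queen cover ``critical'' tiles from several gadgets at once nor lets a queen placed at a non-canonical position cover as much as a queen placed at a canonical one.

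Both issues are handled by a geometric inflation of the construction. Since $P_\phi$ is allowed to have holes, I can (a) translate and rescale the gadgets so that no two critical tiles from distinct gadgets lie on a common diagonal, for instance by placing the $i$th gadget with offset chosen to avoid arithmetic coincidences modulo the overall bounding box, and (b) insert small holes along the diagonals emanating from each candidate guard position, so that a queen placed there sees, via its diagonal rays, only tiles already seen via its horizontal and vertical rays or already guarded by other guards in the canonical solution. After these modifications, a queen placed at any candidate position covers exactly the same set of critical tiles that the corresponding rook would cover in the rook-vision construction.

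The main obstacle is the geometric bookkeeping: for each of the finitely many gadget types one must verify, for every pair (candidate guard position, critical tile), that no diagonal ray from the first hits the second unless it has been explicitly accounted for; and one must check that non-canonical placements (queens on generic tiles of $P'_\phi$) offer no advantage over canonical ones, since the queen's extra mobility makes it easier for ``cheating'' placements to arise than in the rook case. Once this case analysis is complete, the counting argument matches the one in Theorem~\ref{teo:nphard-rook} on the nose, and the $\psat$ reduction carries over, yielding NP-hardness for queen vision. The extension to $d$--polycubes then follows, as noted just before the statement, by embedding $P'_\phi$ in a single $2$--dimensional slab of $\mathbb{R}^d$.
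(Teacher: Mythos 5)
There is a genuine gap. The paper does not salvage the rook gadgets by blocking diagonals with holes; it designs a \emph{new} queen-variable gadget (Figure~\ref{fig:coveringwiththree}) that is guarded by three queens with the required uniqueness and exclusivity properties (Lemma~\ref{lem:queen-variable}), changes the threshold from $m = 4n + \frac{\ell}{2}$ to $m = 3n + \frac{\ell}{2}$, and additionally spaces the bends of the connection gadgets at least three tiles apart so that diagonal vision cannot let queens past a bend ``for free.'' Your proposal instead keeps the rook gadgets and claims that inserting holes along diagonals makes every queen cover exactly what the corresponding rook covers, with the counting matching Theorem~\ref{teo:nphard-rook} ``on the nose.'' That cannot work for the variable gadget of Figure~\ref{fig:coveringwithfour}: the tiles $A$, $W$, the tile between $S$ and $X$, $T$, and $B$ lie on a single diagonal consisting entirely of tiles of the gadget, so a queen placed at $A$ already attacks $B$ in the unmodified gadget. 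This destroys the true/false dichotomy (a ``true'' placement would also cover a ``false'' output tile), and there is no empty space along that diagonal in which to insert a hole --- removing any of those tiles changes or disconnects the gadget itself. So the variable gadget must be redesigned, not merely perforated, and with the redesign the guard count per variable changes as well, so the threshold $m$ does not carry over unchanged.

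Two smaller points. First, your worry about diagonal coincidences \emph{between} distinct gadgets is largely moot under the paper's notion of queen vision: a diagonal ray is cut off as soon as one tile along it leaves the polyomino, and the gadgets are separated by empty grid squares, so no rescaling ``modulo the bounding box'' is needed; the only genuine inter-tile diagonal issue is within a single connection gadget near its bends, which the paper handles by enforcing a minimum segment length of three tiles between consecutive bends. Second, you correctly identify that one must rule out ``cheating'' non-canonical queen placements, but you leave that as unexamined bookkeeping; in the paper this is exactly the content of Lemma~\ref{lem:queen-variable} (no two-queen guard set, and no three-queen guard set containing $A$ together with $B$ or $C$), which is the substantive new work required for the queen case.
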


In Section~\ref{sec:thm1} we prove Theorems~\ref{teo:minimum-rook} and~\ref{teo:minimum-queen}. In Section~\ref{sec:thm2} we prove Theorems~\ref{teo:nphard-rook} and~\ref{teo:nphard-queen}.  In Section~\ref{sec:nonattacking} we prove several more theorems about non-attacking rooks on polyominoes by viewing non-attacking rook sets on polyominoes as matchings in bipartite graphs; this construction is specific to rooks and specific to $2$--dimensional polyominoes.  The paper concludes with some open questions.

\medskip

\emph{Acknowledgments.}  H.~Alpert is supported by the National Science Foundation under Award No.~DMS 1802914, and this project was begun when both authors were in residence at ICERM (Institute for Computational and Experimental Research in Mathematics) in Autumn 2016.  We would like to thank L\'aszl\'o Kozma, who suggested the proof of Theorem~\ref{teo:minimum-rook} that appears in the paper, and Frank Hoffmann, who brought the paper~\cite{Ntafos86} of Ntafos to our attention.


\section{Polyominoes Most Difficult to Guard}\label{sec:thm1}

In this section we prove Theorems~\ref{teo:minimum-rook} and~\ref{teo:minimum-queen}.

\begin{proof}[Proof of Theorem~\ref{teo:minimum-rook}]
First we show that every $d$--polycube with $n$ tiles can be guarded by $\floor{\frac{n}{2}}$ $d$--dimensional rooks.  If the polycube has only one tile, we place one rook on that tile.  Otherwise, we $2$--color the polycube according to the parity of the sum of coordinates of each tile; that is, tiles that share a $(d-1)$--dimensional face get opposite colors.  We take the color with the smaller number of tiles, and place rooks on all tiles of that color, so that there are at most $\floor{\frac{n}{2}}$ rooks.  Because the interior of the polycube is connected, every tile shares a $(d-1)$--dimensional face with some other tile.  Thus every tile is guarded by at least one of the rooks.

Next we exhibit, for every natural number $n$, a polyomino with $n$ tiles such that the minimum number of rooks needed to guard it is $\floor{\frac{n}{2}}$. To get an example in any dimension $d$, we can just thicken this polyomino to $d$ dimensions.  The construction, shown in Figure~\ref{fig:leader} for $n=10$ and $n=11$ tiles, consists of one center column with individual tiles attached on either side, alternating between left and right.  More precisely, given $n=2m$ tiles it is possible to form $m$ horizontal dominoes, that we denote by $d_1$, $d_2$,..., $d_m$, that we use to construct a polyomino $P_n$ by stacking them vertically in the following way: place the first domino $d_1$, then place on top of $d_1$ the second horizontal domino $d_2$ in such a way that the left tile of $d_2$ is on top of the right tile of $d_1$; then, place $d_3$ on top of $d_2$ in such a way that the right tile of $d_3$ is on top of the left tile of $d_2$; keep placing the rest of the dominoes in the same way, alternating the right and left placement relationships from one added domino to the next one. This construction generates a polyomino $P_n$ for even $n = 2m$ such that each domino has one tile in a common column that we denote by $c^*$. For an odd number $n=2m-1$ we construct a polyomino $P_n$ by placing a tile on the bottom-most part of column $c^*$ of $P_{n-1}$.  To prove that $\left\lfloor \frac{n}{2} \right\rfloor$ rook guards are needed, we observe that there are exactly $\left\lfloor \frac{n}{2} \right\rfloor$ tiles not in the center column, and that no two of these can be guarded by the same rook.
\end{proof}

\begin{figure}[h]
\begin{center}
\begin{tikzpicture} 
\foreach \x/\y in {  0/1, -1/1, 0/2, 1/2, 0/3, -1/3, 0/4, 1/ 4, 0/5, -1/5} { 
\path [draw=gray, fill=gray] (.5+\x-0.45, .5+\y-0.45)
-- ++(0,.9)
-- ++(.9,0)
-- ++(0,-.9)
--cycle;
}

\foreach \x/\y in { 0/1, 1/2, -1/3, 1/4, -1/5} { 
 \node[anchor=west] at (\x+.05, \y+.5) {\rook};
}

\foreach \x/\y in { 0 / 0, 0/1, -1/1, 0/2, 1/2, 0/3, -1/3, 0/4, 1/ 4, 0/5, -1/5} { 
\path [draw=gray, fill=gray] (10.5+\x-0.45, .5+\y-0.45)
-- ++(0,.9)
-- ++(.9,0)
-- ++(0,-.9)
--cycle;
}

\foreach \x/\y in { 0/1, 1/2, -1/3, 1/4, -1/5} { 
 \node[anchor=west] at (\x+10.05, \y+.5) {\rook};
}

\end{tikzpicture}
\end{center}
\caption{On the left, we depict a 10-omino that cannot be guarded with fewer than $5$ rooks. On the right, we depict an $11$-omino that cannot be guarded with fewer than $5$ rooks. Any polyomino with this kind of shape with $n$ tiles cannot be guarded with fewer than $\left\lfloor \frac{n}{2} \right\rfloor$ rook guards.}
\label{fig:leader}
\end{figure}
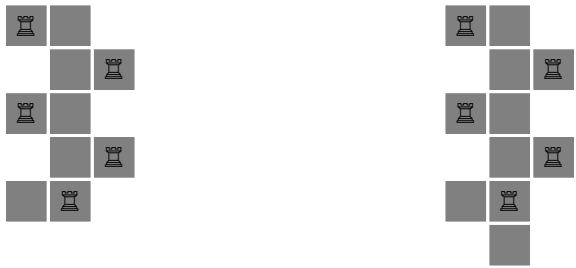

The proof of Theorem~\ref{teo:minimum-queen} relies on labeling the tiles of our $d$--polycube according to the number of steps needed to get back to a given root tile.  We say that two tiles are \textbf{\textit{adjacent}} or are \textbf{\textit{neighbors}} if they share a $(d-1)$--dimensional face.  The \textbf{\textit{$\ell^1$--distance}} between any two tiles of a $d$--polycube is the minimum number of steps needed to get from one to the other, such that each step goes from one tile of the polycube to an adjacent tile of the polycube.

\begin{lemma}\label{lem:dist-2}
In a $d$--polycube, any queen must guard every tile of $\ell^1$--distance at most $2$ from that queen.
\end{lemma}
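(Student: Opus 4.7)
My plan is a direct case analysis on the displacement $w := t - q$, where $q$ is the queen's tile and $t$ is any tile of $\ell^1$-distance at most $2$ from $q$.

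A key preliminary observation is that if the $\ell^1$-distance from $q$ to $t$ equals $k$, then by definition there is a path of $k$ unit-step tile adjacencies inside the polycube joining $q$ to $t$, so every intermediate tile of such a minimizing path is known to lie in the polycube. I will use this to furnish the intermediate tile needed in the only nontrivial case.

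If $w = 0$, the queen guards its own tile. If $w$ has $\ell^1$-norm $1$, then $w$ is a unit coordinate vector, which is a valid queen direction, and $t = q + 1 \cdot w$ is attacked immediately. Now suppose the $\ell^1$-norm of $w$ equals $2$. There are two possibilities: the diagonal case $w = e_1 + e_2$, with $e_1$ and $e_2$ unit coordinate vectors in distinct coordinates, or the collinear case $w = 2e$ for a unit coordinate vector $e$. In the diagonal case, $w$ itself has all coordinates in $\{-1, 0, 1\}$ and so is a valid queen direction; then $t = q + 1 \cdot w$ is the first multiple of this direction, and there is no intermediate-tile condition to check. In the collinear case, the queen's line of attack along $e$ would reach $t$ at multiplier $2$ provided the tile $q + e$ lies in the polycube.

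The main step is justifying that $q + e$ lies in the polycube in the collinear case; everything else is immediate from the definitions. By the preliminary observation, there is a length-$2$ path of adjacent tiles from $q$ to $q + 2e$. A direct check shows that for any unit coordinate vector $e'$, the vector $2e - e'$ is itself a unit coordinate vector only when $e' = e$. Hence the middle tile of any such length-$2$ path must be $q + e$, so $q + e$ lies in the polycube, and the queen's attack along direction $e$ is unobstructed at the first step and reaches $t$. This exhausts all cases.
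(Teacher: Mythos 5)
Your proof is correct and follows essentially the same route as the paper's: both reduce to the single nontrivial case of a displacement $2e$ along one coordinate axis, and both justify the presence of the intermediate tile $q+e$ by appealing to the existence of a length-$2$ path inside the polycube (you make this explicit by noting the middle tile of any such path is forced to be $q+e$, while the paper phrases it as the path otherwise having to ``go around'' the missing tile). No gaps.
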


\begin{proof}
Suppose our queen is at $(0, \ldots, 0)$.  It certainly guards every tile of $\ell^1$--distance $1$ from it, because those tiles are adjacent to it.  Consider any tile of $\ell^1$--distance $2$ from the queen.  If it has all coordinates $0$ except for two that are $\pm 1$, then we know that the queen guards that tile.  If it has all coordinates $0$ except for one that is $\pm 2$, then we just need to check that the one tile between our chosen tile and the queen is inside the polycube.  But, if it were not, then our chosen tile could not have $\ell^1$--distance $2$ from the queen, because every path that stays within the polycube would have to go around the missing tile.  These are the only possibilities for the tiles within $\ell^1$--distance $2$ of the queen, and the queen guards all of them.
\end{proof}

Because of this lemma, when guarding a $d$--polycube with queens it suffices to find a placement of the queens such that every tile of the polycube is within $\ell^1$--distance $2$ of a queen.

\begin{proof}[Proof of Theorem~\ref{teo:minimum-queen}]
First we show that every $d$--polycube with $n$ tiles can be guarded by $\floor{\frac{n}{3}}$ queens.  We select one tile of the polycube to be the \textbf{\textit{root}}, and label all the tiles according to their $\ell^1$--distance from the root.  If possible, we select as the root a tile that is adjacent to only one other tile.  If every tile of the polyomino is adjacent to more than one other tile, then we may select any tile as the root.

The tiles are partitioned into three sets according to whether their $\ell^1$--distance from the root is $0$, $1$, or $2$ mod $3$.  If the $2$ mod $3$ set is empty, then a queen placed on the root guards the whole polyomino.  Otherwise, we place queens on all the tiles in the smallest of the three sets.  We claim that every tile is within $\ell^1$--distance $2$ of at least one queen.  If the queen set is the $0$ mod $3$ set, then from an arbitrary tile, we can find a queen within two steps by walking along a shortest path to the root.  Similarly, if the queen set is the $1$ mod $3$ set, then from every tile except the root, we can find a queen within two steps by walking along a shortest path to the root, and we know that the root is also adjacent to a queen.  

If the queen set is the $2$ mod $3$ set, then from every tile of $\ell^1$--distance at least $2$ from the root, we can find a queen within two steps toward the root, so it remains to check the root and the tiles adjacent to the root.  We know that the root is $\ell^1$--distance exactly $2$ from a queen.  If it is adjacent to only one tile, then that tile is adjacent to a queen.  Otherwise, every tile in the polyomino is adjacent to at least two tiles.  Given a tile adjacent to the root, it must also be adjacent to another tile, and that tile is of $\ell^1$--distance $2$ from the root and thus has a queen---for parity reasons, no two tiles adjacent to the root are adjacent to each other.  Thus in all cases, every tile is within $\ell^1$--distance $2$ of a queen, and thus by Lemma~\ref{lem:dist-2} it is guarded by a queen.

We exhibit a polyomino with $n$ tiles that needs $\floor{\frac{n}{3}}$ queen guards, and this polyomino may be fattened to any dimension $d$.  The construction is very similar to the construction for rooks and is shown in Figure~\ref{fig:ladder-queen}.  If $n = 3m$, we make $m$ rows of $3$ tiles each, and stack them so that the center column contains the right-most tile of the first row, the left-most tile of the second, and the right-most tile of the third, and continues to alternate.  Then if $n = 3m + 1$ or $n = 3m + 2$, we add the remaining one or two tiles to the bottom of the center column.  Then no two of the $m$ tiles furthest to the left and right can be guarded by the same queen, so at least $m = \floor{\frac{n}{3}}$ queens are needed to guard this polyomino.
\end{proof}

\begin{figure}[h]
\begin{center}
\begin{tikzpicture} 
\foreach \x/\y in {  0/1, -1/1, -2/1, 0/2, 1/2, 2/2, 0/3, -1/3, -2/3, 0/4, 1/ 4, 2/4, 0/5, -1/5, -2/5} { 
\path [draw=gray, fill=gray] (.5+\x-0.45, .5+\y-0.45)
-- ++(0,.9)
-- ++(.9,0)
-- ++(0,-.9)
--cycle;
}

\foreach \x/\y in { 0/1, 2/2, -2/3, 2/4, -2/5} { 
 \node[anchor=west] at (\x+.05, \y+.5) {\queen};
}

\foreach \x/\y in {  0/1, -1/1, -2/1, 0/2, 1/2, 2/2, 0/3, -1/3, -2/3, 0/4, 1/ 4, 2/4, 0/5, -1/5, -2/5, 0/0} {
\path [draw=gray, fill=gray] (6.5+\x-0.45, .5+\y-0.45)
-- ++(0,.9)
-- ++(.9,0)
-- ++(0,-.9)
--cycle;
}

\foreach \x/\y in { 0/1, 2/2, -2/3, 2/4, -2/5} { 
 \node[anchor=west] at (\x+6.05, \y+.5) {\queen};
}

\foreach \x/\y in {  0/1, -1/1, -2/1, 0/2, 1/2, 2/2, 0/3, -1/3, -2/3, 0/4, 1/ 4, 2/4, 0/5, -1/5, -2/5, 0/0, 0/-1} {
\path [draw=gray, fill=gray] (12.5+\x-0.45, .5+\y-0.45)
-- ++(0,.9)
-- ++(.9,0)
-- ++(0,-.9)
--cycle;
}

\foreach \x/\y in { 0/1, 2/2, -2/3, 2/4, -2/5} { 
 \node[anchor=west] at (\x+12.05, \y+.5) {\queen};
}

\end{tikzpicture}
\end{center}
\caption{For every $n$, there is a polyomino with $n$ tiles that cannot be guarded by fewer than $\floor{\frac{n}{3}}$ queen guards.  Examples for $n = 15, 16, 17$ are shown.}
\label{fig:ladder-queen}
\end{figure}
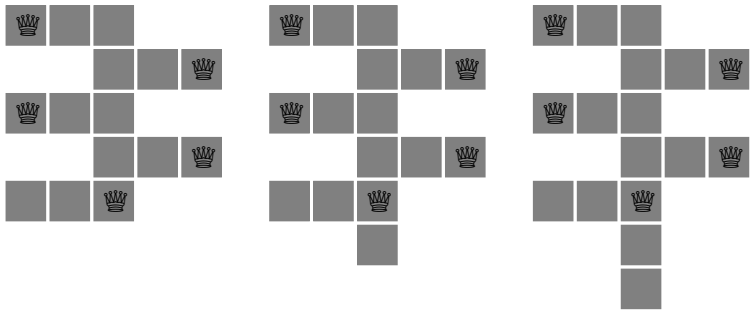


\section{NP-Hardness Proof}\label{sec:thm2}

In this section we prove Theorems~\ref{teo:nphard-rook} and~\ref{teo:nphard-queen} about the NP-hardness of the rook-visibility and queen-visibility guard set problems, using the NP-hardness of PLANAR 3SAT.  First we prove the rook version, and then we show how to make the minor modifications needed for the queen version.  

We reproduce here the definition of PLANAR 3SAT as given by Iwamoto and Kume in~\cite{Iwamoto14}, who attribute the description there to~\cite{Garey79}.  Let $U = \{x_1, x_2, \ldots, x_n\}$ be a set of Boolean variables, which take on the values $0$ (false) and $1$ (true).  If $x$ is a variable in $U$, then $x$ and $\overline{x}$ are called literals, with the value of $\overline{x}$ defined to be $1$ if $x$ is $0$ and vice versa.  A clause is any set of literals over $U$, such as $\{\overline{x}_1, x_3, x_4\}$ which represents ``$x_1$ is false or $x_3$ is true or $x_4$ is true'', and the clause is satisfied by a truth assignment whenever at least one of its literals is true under that assignment.

An instance of PLANAR 3SAT is a collection $C$ of clauses over $U$, each containing either $2$ or $3$ literals, such that the following bipartite graph is planar: the vertex set is $U \cup C$, and the pair $\{x, c\}$ is an edge whenever either literal $x$ or $\overline{x}$ belongs to the clause $c$.  The PLANAR 3SAT problem asks whether there exists some truth assignment for $U$ that simultaneously satisfies all the clauses in $C$.  This problem is known to be NP-hard.  It is even known to be NP-hard under the additional hypothesis that every variable $x$ occurs exactly once positively (i.e., as $x$) and exactly twice negatively (i.e., as $\overline{x}$) in $C$; this problem is called $\psat$, which stands for PLANAR 3SAT WITH EXACTLY 3 OCCURRENCES PER VARIABLE~\cite{Cerioli08}.

To prove Theorem~\ref{teo:nphard-rook}, we give a polynomial reduction that constructs a rook-visibility guard set problem for each instance of \psat.  Given an instance of $\psat$, we replace each variable by a \textit{\textbf{variable gadget}}, each clause by a \textit{\textbf{clause gadget}}, and each edge between variable and clause by a \textit{\textbf{connection gadget}}.  

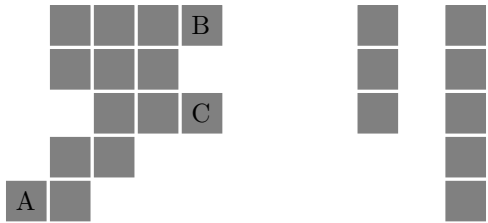
\begin{figure}
\begin{center}
\begin{tikzpicture} 
\foreach \x/\y in { 1/0, 2/0, 2/1, 2/3, 2/4, 3/1, 3/2, 3/3, 3/4, 4/2 , 4/3, 4/4, 5/2 ,  5/4, 9/2, 9/3, 9/4, 11/0, 11/1, 11/2, 11/3, 11/4} { 
\path [draw=gray, fill=gray] (.5+\x-0.45, .5+\y-0.45)
-- ++(0,.9)
-- ++(.9,0)
-- ++(0,-.9)
--cycle;
}

\node[anchor=west] at (1+.05, 0+.5){A};
\node[anchor=west] at (5+.05, 4+.5){B};
\node[anchor=west] at (5+.05, 2+.5){C};
\end{tikzpicture}
\end{center}
\caption{In the polyomino corresponding to an instance of $\psat$, each variable is represented by the variable gadget shown on the left, and each clause is represented by one of the clause gadgets shown on the right, according to whether the clause contains two literals or three.}\label{fig:gadgets}
\end{figure}

Each connection gadget is a path one tile wide from a variable gadget to a clause gadget.  The variable gadget appears in the left of Figure~\ref{fig:gadgets}.  Recall that each variable appears in one clause positively and in two clauses negatively.  At the variable gadget, the connection gadget to the clause with the positive literal extends out to the left from tile $A$, and the connection gadgets to the two clauses with the negative literals extend out to the right from tiles $B$ and $C$.  The clause gadget is either three vertically-consecutive tiles, if the clause contains two literals, or five vertically-consecutive tiles, if the clause contains three literals.  These two options appear in the right of Figure~\ref{fig:gadgets}.  The connection gadgets extend sideways from the first and third tiles, or from the first, third, and fifth tiles.

The proof of Lemma~\ref{lem:polynomial-size} gives more detail on how to construct the polyomino from the three types of gadgets.  First, we prove the distinctive properties of the variable gadget that will be needed in the main proof to encode $\psat$.

\begin{lemma}\label{lem:variable-gadget}
There is only one way to guard the variable gadget with four rooks if one of them is at $A$, and there is only one way to guard the variable gadget with four rooks if two of them are at $B$ and $C$. 
\end{lemma}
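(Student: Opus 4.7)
The plan is to analyze the coverage requirements of the variable gadget by writing down, for each tile $t$, the set $N(t)$ of tile positions from which a rook guards $t$. A rook guards $t$ iff it lies in $t$'s row or column with no missing tile between them. Two crucial gaps in the gadget, at $(2,2)$ and $(5,3)$, split column $x=2$ into $\{(2,0),(2,1)\}$ and $\{(2,3),(2,4)\}$ and split column $x=5$ into $\{(5,2)\}$ and $\{(5,4)\}$; these gaps are what make the forcing arguments below work. With $A=(1,0)$, $B=(5,4)$, $C=(5,2)$ as in Figure~\ref{fig:gadgets}, I would first tabulate each $N(t)$, then use a short pigeonhole/forcing argument in each of the two cases.

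For the first statement, suppose a rook sits at $A$. Then $A$ guards only $(1,0)$ and $(2,0)$, and the remaining three rooks must cover the other twelve tiles. The coverer sets
$$N(2,1)=\{(2,0),(2,1),(3,1)\},\quad N(5,2)=\{(3,2),(4,2),(5,2)\},\quad N(5,4)=\{(2,4),(3,4),(4,4),(5,4)\}$$
are pairwise disjoint, so each of the three remaining rooks must occupy exactly one location, one from each set. I then force each choice in turn. The tile $(2,3)$ has $N(2,3)=\{(2,3),(3,3),(4,3),(2,4)\}$, whose intersection with the three allowed sets is only $\{(2,4)\}$, so the $N(5,4)$-rook is at $(2,4)$. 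Next, $(4,3)$ has $N(4,3)=\{(2,3),(3,3),(4,3),(4,2),(4,4)\}$; the chosen $(2,4)$ does not lie in this set, so the $N(5,2)$-rook must be $(4,2)$. Finally, neither $A$, $(2,4)$, nor $(4,2)$ lies in $N(3,3)=\{(2,3),(3,3),(4,3),(3,1),(3,2),(3,4)\}$, so the $N(2,1)$-rook must be $(3,1)$. A direct verification then shows $\{A,(2,4),(4,2),(3,1)\}$ guards all fourteen tiles, so this configuration is the unique one.

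For the second statement, suppose rooks sit at $B$ and $C$; together they guard the seven tiles in rows $y=2$ and $y=4$. The remaining seven tiles $(1,0),(2,0),(2,1),(3,1),(2,3),(3,3),(4,3)$ must be covered by the two further rooks. Since $N(1,0)=\{(1,0),(2,0)\}$, one of these two rooks lies in $\{A,(2,0)\}$. If it were $A$, then the fourth rook would need to guard both $(2,1)$ and $(2,3)$, but $N(2,1)$ and $N(2,3)=\{(2,3),(2,4),(3,3),(4,3)\}$ are disjoint, so no such rook exists. Hence the third rook is $(2,0)$, and the fourth must guard $(3,1),(2,3),(3,3),(4,3)$ all at once. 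Intersecting $N(3,1)=\{(2,1),(3,1),(3,2),(3,3),(3,4)\}$ with $N(2,3)$ gives only $\{(3,3)\}$, and a rook at $(3,3)$ indeed covers all four remaining tiles, yielding the unique configuration $\{B,C,(2,0),(3,3)\}$.

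The main obstacle is not any deep mathematics but careful bookkeeping: correctly computing the coverer sets in the presence of the missing tiles at $(2,2)$ and $(5,3)$, and enumerating enough forcing implications in case~(i) to collapse a three-dimensional choice (one rook from each of three sets) down to a single option. Both cases are essentially exhaustive but become short once the three disjoint critical tiles $(2,1),(5,2),(5,4)$ in case~(i) and the two-element set $N(1,0)$ in case~(ii) are identified.
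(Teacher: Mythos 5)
Your proof is correct and follows essentially the same route as the paper: both arguments exploit the disjointness of the sets of positions that can guard the critical tiles ($W=(2,1)$, $B$, $C$ in the first case; $A$, $B$, $C$ and the row of $R,S,T$ in the second) and then force each remaining rook's position one tile at a time, arriving at the same two configurations $\{A,(3,1),(4,2),(2,4)\}$ and $\{B,C,(2,0),(3,3)\}$. Your explicit tabulation of the coverer sets $N(t)$ is just a more systematic bookkeeping of the paper's narrative forcing argument, and all the sets you compute are consistent with the gadget's missing tiles at $(2,2)$ and $(5,3)$.
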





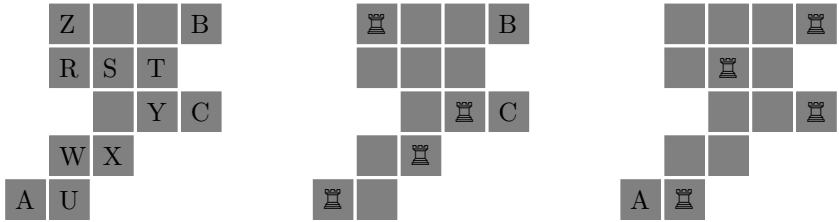
\begin{figure}[h]
\begin{center}
\begin{tikzpicture} 
\foreach \x/\y in { 1/0, 2/0, 2/1, 2/3, 2/4, 3/1, 3/2, 3/3, 3/4, 4/2 , 4/3, 4/4, 5/2 ,  5/4} { 
\path [draw=gray, fill=gray] (.5+\x-0.45, .5+\y-0.45)
-- ++(0,.9)
-- ++(.9,0)
-- ++(0,-.9)
--cycle;
}

\node[anchor=west] at (0+1.05, 0+.5){A};
\node[anchor=west] at (5+.05, 4+.5){B};
\node[anchor=west] at (5+.05, 2+.5){C};
\node[anchor=west] at (2+.05, 3+.5){R};
\node[anchor=west] at (3+.05, 3+.5){S};
\node[anchor=west] at (4+.05, 3+.5){T};
\node[anchor=west] at (2+.05, 0+.5){U};
\node[anchor=west] at (3.05, 1+.5) {X};
\node[anchor=west] at (4.05, 2+.5) {Y};
\node[anchor=west] at (2.05, 4+.5) {Z};
\node[anchor=west] at (+0+2.05, 0+1.5){W};

\foreach \x/\y in { 1/0, 2/0, 2/1, 2/3, 2/4, 3/1, 3/2, 3/3, 3/4, 4/2 , 4/3, 4/4, 5/2 ,  5/4} { 
\path [draw=gray, fill=gray] (7.5+\x-0.45, .5+\y-0.45)
-- ++(0,.9)
-- ++(.9,0)
-- ++(0,-.9)
--cycle;
}

\node[anchor=west] at (7+1.05, 0+.5) {\rook};
\node[anchor=west] at (7+3.05, 1+.5) {\rook};
\node[anchor=west] at (7+4.05, 2+.5) {\rook};
\node[anchor=west] at (7+2.05, 4+.5) {\rook};

\node[anchor=west] at (7+5+.05, 4+.5){B};
\node[anchor=west] at (7+5+.05, 2+.5){C};

\foreach \x/\y in { 1/0, 2/0, 2/1, 2/3, 2/4, 3/1, 3/2, 3/3, 3/4, 4/2 , 4/3, 4/4, 5/2 ,  5/4} { 
\path [draw=gray, fill=gray] (14.5+\x-0.45, .5+\y-0.45)
-- ++(0,.9)
-- ++(.9,0)
-- ++(0,-.9)
--cycle;
}
\node[anchor=west] at (7+7+2.05, 0+.5) {\rook};
\node[anchor=west] at (7+7+3.05, 3+.5) {\rook};
\node[anchor=west] at (7+7+5.05, 4+.5) {\rook};
\node[anchor=west] at (7+7+5.05, 2+.5) {\rook};
\node[anchor=west] at (14+0+1.05, 0+.5){A};

\end{tikzpicture}
\end{center}
\caption{Labeled variable gadget and unique four-guards sets.}
\label{fig:coveringwithfour}
\end{figure}
\begin{proof}
In Figure \ref{fig:coveringwithfour} we depict the variable gadget and we label some of its tiles. Also, we depict the unique configurations stated in this lemma.

If a rook is on $A$, then the remaining three rooks have to guard tiles $W$, $B$, and $C$. Observe that no two of these tiles can be guarded by the same rook because they are in different rows and in different columns. Then three rooks are needed to guard tiles $W$, $B$, and $C$, and this cannot be accomplished if one of these three rooks is in the row containing tiles $R$, $S$, and $T$, because these tiles do not share rows or columns in common with tiles $W$, $B$, or $C$. Also, the rook on $A$ cannot guard tiles $R$, $S$, or $T$. This forces the rooks guarding $W$, $B$, and $C$ to also guard tiles $R$, $S$, and $T$. This can only be done if the rooks are placed on tiles $X$, $Y$, and $Z$.    
 
If two rooks are on tiles $B$ and $C$, then there is a third different rook that has to guard tile $A$. None of these three rooks are able to guard tiles $R$, $S$, or $T$. Thus a fourth rook has to be placed in this row. Tile $W$ cannot be guarded by the rooks at $B$ and $C$, or by the rook that guards $R$, $S$, and $T$.  Thus, $W$ has to be guarded by the rook guarding $A$, and this forces that third rook to be placed on $U$. Also, tile $X$ has to be guarded, and it cannot be guarded by the rooks guarding tiles $B$, $C$, or $A$. This forces the fourth rook to be placed on $S$.

\end{proof}

\begin{corollary}\label{cor}
Let $G$ be an arbitrary four-rook set guarding all tiles of the variable gadget. If $G$ has a rook on tile $B$ or $C$, then $G$ has no rook on tile $A$, and if $G$ has a rook on $A$ then it has no rooks on $B$ or $C$. Also, there does not exist a three-rook set guarding all tiles of the variable gadget.
\end{corollary}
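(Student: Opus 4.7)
The statement has two parts. For the first—that any four-rook guard set $G$ cannot place a rook on $A$ together with a rook on $B$ or $C$—my plan is to read it directly off Lemma~\ref{lem:variable-gadget}. If $G$ has a rook on $A$, the lemma forces $G = \{A, X, Y, Z\}$, which contains no rook on $B$ and no rook on $C$; the other direction is then merely the contrapositive of the same statement.

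For the second part—that no three-rook guard set exists—my plan is to isolate four ``witnesses'' that cannot be simultaneously guarded by only three rooks, namely $A$, $B$, $C$, and the middle row $\{R, S, T\}$. First I would enumerate the tiles from which each of $A$, $B$, $C$ can be attacked, using the polyomino's internal gaps: the narrow bottom hook around $A$, the gap at $(5,3)$ that cuts column $5$ and therefore prevents any rook at $C$ from reaching $B$ (and vice versa), and the analogous gap at $(2,2)$ that cuts column $2$. The resulting candidate sets are $\{A, U\}$, $\{B, Z, (3,4), (4,4)\}$, and $\{C, (3,2), Y\}$, and these three sets are pairwise disjoint. Consequently, any three-rook guard set must contain exactly one rook drawn from each set.

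The remaining step is to show that these three rooks cannot also cover all of $R$, $S$, and $T$. A rook at $A$ sees only $A$ and $U$ in the gadget, and a rook at $U$ additionally sees $W$ before being blocked by the gap at $(2,2)$; in particular neither reaches any tile of row $3$. Going through the four $B$-candidates: $Z$ reaches only $R$ via column $2$, $(3,4)$ reaches only $S$ via column $3$, $(4,4)$ reaches only $T$ via column $4$, and $B$ itself reaches none of $R, S, T$; likewise among the $C$-candidates, $(3,2)$ reaches only $S$, $Y$ reaches only $T$, and $C$ itself reaches none. Therefore the $B$-rook and the $C$-rook between them cover at most two of the three tiles $R, S, T$, leaving at least one of them unguarded—contradicting that $G$ guards the gadget. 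The main obstacle is the purely mechanical bookkeeping of visibility from each candidate position, terminating the lines of attack at every missing tile, but once that enumeration is done carefully the counting argument is immediate.
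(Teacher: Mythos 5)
Your proposal is correct, and its first half (the $A$ versus $B,C$ exclusion) is exactly the paper's argument: read it off the uniqueness in Lemma~\ref{lem:variable-gadget}. For the second half, however, you take a genuinely different route. The paper stays inside the uniqueness framework: assuming a three-rook guarding set exists, it argues no rook can be at $A$, adjoins a fourth rook at $A$ to obtain a four-rook guarding set containing $A$, invokes uniqueness to conclude that set is $\{A,X,Y,Z\}$, and then observes that deleting the rook at $A$ leaves $A$ unguarded --- a short argument that recycles Lemma~\ref{lem:variable-gadget} and requires no new visibility analysis. You instead give a self-contained counting argument: the attack neighborhoods of $A$, $B$, and $C$ are the pairwise disjoint sets $\{A,U\}$, $\{B,Z,(3,4),(4,4)\}$, and $\{C,(3,2),Y\}$ (I checked these against the gadget, including the column breaks at $(2,2)$ and $(5,3)$, and they are right), so a three-rook guard must take exactly one rook from each; the $A$-rook sees none of $R,S,T$ and each of the other two sees at most one of them, so some tile of the middle row is missed. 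Your version costs more bookkeeping but is independent of the uniqueness statement and yields slightly more information (any guarding set must meet each of those three disjoint sets); the paper's version is shorter precisely because it leans on the work already done in Lemma~\ref{lem:variable-gadget}. Both are valid.
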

\begin{proof}
In Lemma \ref{lem:variable-gadget} we have proved that in any four-rook set guarding the variable gadget, once a rook is placed on $A$ then there cannot be a rook placed on $B$ or $C$. 
If there exists a three-rook set guarding all tiles of the variable gadget, then none of the rooks is on $A$ because of the uniqueness proved in Lemma \ref{lem:variable-gadget}. Then we can place a fourth rook on $A$. This gives us a four-rook set guarding the variable gadget with a rook placed on $A$; thus this set has to be the one described in Lemma \ref{lem:variable-gadget}, and then if the rook placed at $A$ is removed, then tile $A$ is not guarded. Thus there cannot exist a three-rook set guarding the variable gadget. 
\end{proof}

In the next lemma, we check that given an instance of $\psat$, the corresponding instance of the rook-visibility guard set problem is not too large.

\begin{lemma}\label{lem:polynomial-size}
Let $C$ be an instance of $\psat$ with $n$ variables.  We can construct a polyomino $P(C)$ with $O(n^2)$ tiles, formed by taking a planar drawing of the instance $C$ and replacing the variable nodes by variable gadgets, the clause nodes by clause gadgets, and the edges by connection gadgets.
\end{lemma}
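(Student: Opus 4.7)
The plan is to build $P(C)$ by starting from a planar drawing of the variable–clause bipartite graph of $C$ on a grid, then substituting each vertex with the appropriate gadget from Figure~\ref{fig:gadgets} and each edge with a rectilinear one-tile-wide connection gadget. The size bound will then follow from a coarse area estimate.

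First I would bound the combinatorial size of the instance. Since $C$ is a $\psat$ instance on $n$ variables, each variable appears in exactly three clauses, so the bipartite graph has exactly $3n$ edges. Because every clause contains at least $2$ literals, the number of clauses is at most $\tfrac{3n}{2}$, and the total number of vertices (variables plus clauses) is $O(n)$. Next, I would invoke a standard planar orthogonal grid drawing: any planar graph with $v$ vertices admits a planar embedding on an $O(v)\times O(v)$ grid in which vertices are at grid points and edges are rectilinear polylines with $O(v)$ bends each (for example, Tamassia-style or Schnyder-type constructions). Apply this to the bipartite graph underlying $C$, then scale the grid by a constant factor so that each vertex has a $k\times k$ neighborhood disjoint from all others, where $k$ is a fixed constant chosen at least as large as both gadget widths.

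Then I would substitute gadgets. At each variable vertex, I place the variable gadget so that the positive-literal side (tile $A$) points toward its unique positively-incident clause in the embedding and the two negative-literal sides (tiles $B$ and $C$) point toward the other two clauses; by the planar embedding this orientation is globally consistent up to cyclic order at each vertex, and local rotations/reflections of the constant-size gadget before placement accommodate it. Each clause vertex is replaced by its $3$-tile or $5$-tile gadget. Edges are replaced by connection gadgets: for each edge in the drawing, follow its rectilinear polyline route and fill it with a one-tile-wide strip of tiles, starting at the appropriate attaching tile of the variable gadget and ending at the appropriate attaching tile of the clause gadget. Small amounts of padding (extra tiles and an extra bend) can be inserted inside the connection paths to fix the parity of length or to shift a row or column by one, so that the endpoint tiles line up with $A$, $B$, $C$ and the clause slots.

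Finally, I would count. The whole construction sits inside an $O(n)\times O(n)$ rectangle in the plane, hence consists of at most $O(n^2)$ tiles; explicitly, each of the $O(n)$ gadgets contributes a constant number of tiles, and each of the $3n$ connection gadgets contributes $O(n)$ tiles, for a total of $O(n^2)$. All steps are clearly polynomial-time computable from $C$. The main technical obstacle is not the size bound itself but ensuring that, after substitution, connection gadgets attach at the correct distinguished tiles ($A$, $B$, $C$ of variable gadgets and the outgoing tiles of clause gadgets) and that one can route them without collisions; this is where the constant-factor scaling of the grid and the freedom to reroute a connection path with a few extra bends are used, and neither changes the $O(n^2)$ asymptotics.
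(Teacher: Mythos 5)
Your proposal is correct and follows the same overall architecture as the paper's proof: embed the variable--clause graph of the \psat\ instance on a grid of side length $O(n)$, blow the grid up by a constant factor so each vertex has room for its constant-size gadget, substitute variable, clause, and connection gadgets, and conclude the $O(n^2)$ tile count from the area of the bounding box. The one substantive difference is the graph-drawing primitive you invoke to get the linear-size grid drawing: you cite general orthogonal grid drawing results for bounded-degree planar graphs (Tamassia-style), whereas the paper constructs the drawing explicitly from a weak visibility representation (horizontal vertex segments, vertical edge segments, compressed to consecutive integer coordinates, giving an explicit $18n \times 18n$ bound before the $15 \times 15$ subdivision). Both tools are standard and both yield the required $O(n) \times O(n)$ drawing; your route black-boxes more but is equally valid, and your attention to the attachment-point issue (routing each connection gadget from the correct tile $A$, $B$, or $C$ and fixing bend parity with local detours) matches what the paper handles via the winding paths of Figure~\ref{fig:tangle}. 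Note that the degree bound (every vertex has degree $2$ or $3$) is what licenses the orthogonal-drawing theorem you use, so it is worth stating that hypothesis explicitly when you invoke it.
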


\begin{proof}
If we did not care about the size of the polyomino, we would simply take any smoothly differentiable planar drawing of $C$ and scale it up so large that when we superimposed it on the grid of unit squares, we could locally replace the vertices by variable and clause gadgets and the edges by grid paths.  The question is how to limit the amount of scaling up that we need to do in order to make these local replacements.

We use a procedure described in~\cite{Biedl13} that creates a planar drawing of the graph of $C$ on the grid of unit squares in such a way that every vertex is a unit square, every edge is a path of unit squares between the two vertices of the edge, and the whole drawing fits in an $O(n)$ by $O(n)$ square.  For convenience we describe the process again here.  Figure~\ref{fig:vis} depicts the process where the graph is the cube.

A (weak) \textbf{\textit{visibility representation}} of a planar graph is a way to draw it in the plane such that each vertex is a horizontal segment, each edge is a vertical segment, the endpoints of each edge are on the horizontal segments corresponding to the two incident vertices, and there are no other intersections among the segments.  Every planar graph has a visibility representation.  A quick proof of this fact, based on the fact that every planar graph can be drawn with straight edges, can be found in~\cite{Duchet83}.

Given a visibility representation of our planar graph, we look at the collection of $x$-coordinates and $y$-coordinates of the endpoints of all the segments, and assume without loss of generality that the only repeated coordinates are those that are implied by having a visibility representation.  We thicken each segment slightly to get a collection of rectangles with the same combinatorial relationship as in our desired end result.  Then we make a new set of rectangles with the same combinatorial relationship, by moving all the horizontal coordinates to consecutive integer values while keeping them in the same order, and similarly for the vertical coordinates.  

In our instance $C$ of $\psat$, there are $n$ variables, so there are exactly $3n$ edges and fewer than $6n$ total nodes.  Our drawing has one rectangle for each node and one rectangle for each edge, so there are at most $9n$ rectangles, with at most $18n$ horizontal coordinates and at most $18n$ vertical coordinates.  Thus our new drawing fits in an $18n$ by $18n$ square.

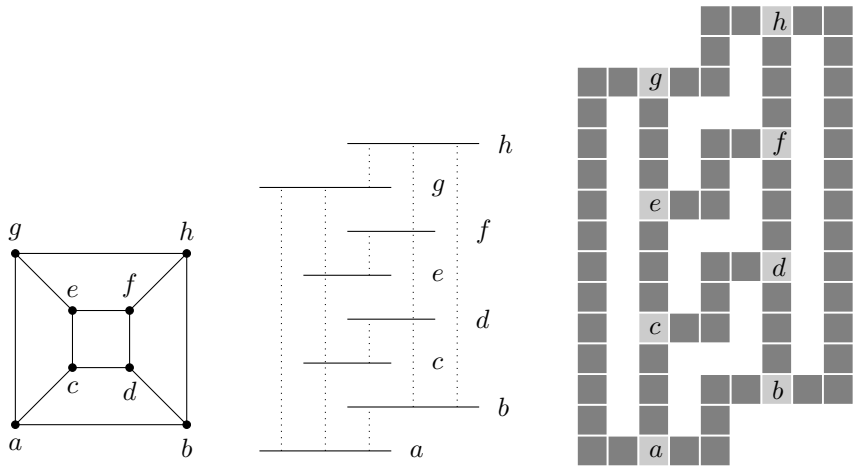
\begin{figure}
\begin{center}
\begin{tikzpicture}[scale=1.3, vert/.style={circle, draw=black, fill=black, inner sep = 0pt, minimum size = 1mm}]
\node[vert] at (1, 1) [label=below: $a$] {};
\node[vert] at (4, 1) [label=below: $b$] {};
\node[vert] at (2, 2) [label=below: $c$] {};
\node[vert] at (3, 2) [label=below: $d$] {};
\node[vert] at (2, 3) [label=above: $e$] {};
\node[vert] at (3, 3) [label=above: $f$] {};
\node[vert] at (1, 4) [label=above: $g$] {};
\node[vert] at (4, 4) [label=above: $h$] {};
\draw (1, 1)--(4, 1)--(4, 4)--(1, 4)--(1, 1) (2, 2)--(3, 2)--(3, 3)--(2, 3)--(2, 2) (1, 1)--(2, 2) (4, 1)--(3, 2) (4, 4)--(3, 3) (1, 4)--(2, 3);
\end{tikzpicture} \hspace{.2in}
\begin{tikzpicture}
\draw (.5, 1)--(3.5, 1) (2.5, 2)--(5.5, 2) (1.5, 3)--(3.5, 3) (2.5, 4)--(4.5, 4) (1.5, 5)--(3.5, 5) (2.5, 6)--(4.5, 6) (.5, 7)--(3.5, 7) (2.5, 8)--(5.5, 8);
\draw[dotted] (1, 1)--(1, 7) (2, 1)--(2, 7) (3, 1)--(3, 2) (4, 2)--(4, 8) (5, 2)--(5, 8) (3, 3)--(3, 4) (3, 5)--(3, 6) (3, 7)--(3, 8);
\node at (3.5, 1) [label=right: $a$] {};
\node at (5.5, 2) [label=right: $b$] {};
\node at (4, 3) [label=right: $c$] {};
\node at (5, 4) [label=right: $d$] {};
\node at (4, 5) [label=right: $e$] {};
\node at (5, 6) [label=right: $f$] {};
\node at (4, 7) [label=right: $g$] {};
\node at (5.5, 8) [label=right: $h$] {};
\end{tikzpicture} \hspace{.2in}
\begin{tikzpicture}[scale=.7]
\foreach \x/\y in { 1/1, 2/1, 4/1, 5/1, 1/2, 3/2, 5/2, 1/3, 3/3, 5/3, 6/3, 8/3, 9/3, 1/4, 3/4, 7/4, 9/4, 1/5, 4/5, 5/5, 7/5, 9/5, 1/6, 3/6, 5/6, 7/6, 9/6, 1/7, 3/7, 5/7, 6/7, 9/7, 1/8, 3/8, 7/8, 9/8, 1/9, 4/9, 5/9, 7/9, 9/9, 1/10, 3/10, 5/10, 7/10, 9/10, 1/11, 3/11, 5/11, 6/11, 9/11, 1/12, 3/12, 7/12, 9/12, 1/13, 2/13, 4/13, 5/13, 7/13, 9/13, 5/14, 7/14, 9/14, 5/15, 6/15, 8/15, 9/15} { 
\path [draw=gray, fill=gray] (.5+\x-0.45, .5+\y-0.45)
-- ++(0,.9)
-- ++(.9,0)
-- ++(0,-.9)
--cycle;
}

\foreach \x/\y in {3/1, 7/3, 3/5, 7/7, 3/9, 7/11, 3/13, 7/15} { 
\path [draw=gray!40, fill=gray!40] (.5+\x-0.45, .5+\y-0.45)
-- ++(0,.9)
-- ++(.9,0)
-- ++(0,-.9)
--cycle;
}

\node[anchor=west] at (3+.05, 1+.5){$a$};
\node[anchor=west] at (7+.05, 3+.5){$b$};
\node[anchor=west] at (3+.05, 5+.5){$c$};
\node[anchor=west] at (7+.05, 7+.5){$d$};
\node[anchor=west] at (3+.05, 9+.5){$e$};
\node[anchor=west] at (7+.05, 11+.5){$f$};
\node[anchor=west] at (3+.05, 13+.5){$g$};
\node[anchor=west] at (7+.05, 15+.5){$h$};
\end{tikzpicture}
\end{center}
\caption{Using a visibility representation, any planar graph can be drawn on the square grid inside a box of side length $O(\abs{V} + \abs{E})$.}\label{fig:vis}
\end{figure}

At this stage, each node of the graph of $C$ is represented by a horizontal strip of squares.  But, we know that every node has degree $2$ or $3$.  Thus, for each node, if it has degree $2$ we select any square of the strip to be the node square, and think of the remaining squares as edge squares.  If it has degree $3$ we select the node square to be the square of the strip that touches the middle one of the three incident edges, and think of the remaining squares as parts of the two outer edges.  In this way, we have succeeded in drawing the graph of $C$ on the square grid in a square of side length $O(n)$.

The only further modification we need is to accommodate the fact that the variable gadget and the clause gadgets are more than one square large.  To do this, we subdivide the drawing we have made---replacing each unit square by a $15$ by $15$ grid should be enough.  This is enough space for the variable gadgets and the clause gadgets to be placed in the middle of the corresponding node squares, and for the connection gadgets to extend outward from them toward the corresponding edge squares so that they can follow the middles of the edge squares, as in Figure~\ref{fig:tangle}.  The resulting polyomino $P(C)$ has $O(n^2)$ tiles.
\end{proof}

\begin{figure}
\begin{center}
\begin{tikzpicture}[scale=.7]
\foreach \x in {-7, -6, -5, -4, -3, -2, -1, 0, 1, 2, 3, 4, 5, 6, 7}{
\foreach \y in {-7, -6, -5, -4, -3, -2, -1, 0, 1, 2, 3, 4, 5, 6, 7}{
\path [draw=gray!20, fill=gray!20] (.5+\x-0.45, .5+\y-0.45)
-- ++(0,.9)
-- ++(.9,0)
-- ++(0,-.9)
--cycle;
}
}
\foreach \x/\y in {-4/-2, -3/-2, -3/-1, -2/-1, -2/0, -1/0, 0/0, -3/1, -2/1, -1/1, -3/2, -2/2, -1/2, 0/2} { 
\path [draw=gray!60, fill=gray!60] (.5+\x-0.45, .5+\y-0.45)
-- ++(0,.9)
-- ++(.9,0)
-- ++(0,-.9)
--cycle;
}
\foreach \x/\y in {-5/-2, -6/-2, -6/-3, -6/-4, -5/-4, -4/-4, -3/-4, -2/-4, -1/-4, 0/-4, 1/-4, 2/-4, 3/-4, 4/-4, 5/-4, 6/-4, 6/-3, 6/-2, 6/-1, 6/0, 7/0, 1/0, 2/0, 3/0, 4/0, 4/1, 4/2, 4/3, 4/4, 4/5, 4/6, 3/6, 2/6, 1/6, 0/6, 0/7, 1/2, 2/2, 2/3, 2/4, 1/4, 0/4, -1/4, -2/4, -3/4, -4/4, -5/4, -5/3, -5/2, -5/1, -5/0, -6/0, -7/0} { 
\path [draw=gray, fill=gray] (.5+\x-0.45, .5+\y-0.45)
-- ++(0,.9)
-- ++(.9,0)
-- ++(0,-.9)
--cycle;
}
\end{tikzpicture}
\end{center}
\caption{Within each subdivided vertex square, the connection gadgets may have to wind around to get from their starting points on the variable or clause gadget to their neighboring edge squares.}\label{fig:tangle}
\end{figure}

The next lemma statement involves counting bends of connection gadgets.  A \textit{\textbf{bend}} is a tile where the path of the connection gadget changes between horizontal and vertical.  Because each connection gadget is horizontal on both ends where it connects to the variable and clause gadgets, it must have an even number of bends.  The next two lemmas show that given an instance of $\psat$, the corresponding instance of the rook-visibility guard set problem is indeed equivalent.

\begin{lemma}\label{lem:can-place-rooks}
Let $C$ be an instance of $\psat$ as in Lemma~\ref{lem:polynomial-size}, and let $P(C)$ be its associated polyomino.  Let $m = 4n + \frac{\ell}{2}$, where $n$ is the number of variables and $\ell$ is the total number of bends in all the connection gadgets.  If $C$ is satisfiable, then $P(C)$ can be guarded by $m$ rooks.
\end{lemma}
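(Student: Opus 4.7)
The plan is to translate a satisfying truth assignment $\tau \colon U \to \{0,1\}$ for $C$ into an explicit set of $m = 4n + \ell/2$ rooks on $P(C)$. I would organize the placement into three stages, one for each kind of gadget, and use the satisfiability of $C$ in the third stage to avoid spending any additional rook on the clause-gadget interiors.

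In the first stage I would, for each variable $x$, place $4$ rooks in its variable gadget according to Lemma~\ref{lem:variable-gadget}: the A-configuration if $\tau(x) = 1$, and the BC-configuration if $\tau(x) = 0$. This contributes $4n$ rooks and, by Lemma~\ref{lem:variable-gadget}, guards every tile of every variable gadget; moreover, in both configurations one of the rooks lies in the row leaving each of $A$, $B$, $C$, so the initial straight segment of every outgoing connection gadget is also guarded. In the second stage, for each connection gadget with $b$ bends (necessarily even), I would label its bends $1, \ldots, b$ from the variable end and its straight segments $s_1, \ldots, s_{b+1}$ accordingly, and then place $b/2$ rooks at the bends $2, 4, \ldots, b$. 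Since each bend tile sits at the corner of two consecutive segments, the rook at bend $2k$ guards $s_{2k}$ and $s_{2k+1}$ via its row and column, and together these $b/2$ rooks cover $s_2, \ldots, s_{b+1}$. This stage uses $\ell/2$ rooks in total, and the rook at the last bend additionally sees the clause-gadget tile $K$ adjacent to $s_{b+1}$ along its row.

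In the third stage, the only tiles possibly still unguarded are the non-attachment tiles of each clause gadget (the middle tile in a $3$-tile clause, the two interior tiles in a $5$-tile clause). For each clause $c$, I would pick a literal $\ell_c \in c$ satisfied by $\tau$ (which exists by hypothesis), and modify the previous placement on $\ell_c$'s connection gadget by shifting one of its rooks so that it sits on the clause-entry tile $K$ instead of the last bend. The key point is that when $\ell_c$ is true, the chosen variable-gadget configuration puts a rook right on $A$, $B$, or $C$ itself, providing extra coverage along the first bend of $\ell_c$'s connection. This slack allows the bend rooks on that connection to be re-indexed from even to odd positions without losing coverage of any interior segment, while the new rook at $K$, being in the clause column, guards the entire clause column (including the non-attachment tiles) plus the final segment $s_{b+1}$ via its row. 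The total rook count is unchanged, so the whole placement uses $4n + \ell/2 = m$ rooks.

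The main obstacle will be the parity-shift bookkeeping in the third stage: one has to verify that, on each satisfying connection, the combination of the variable-end slack, the $b/2 - 1$ re-indexed bend rooks, and the new rook at $K$ really covers every interior tile of the connection as well as every clause-gadget tile. This check uses the evenness of $b$ (following from the connection being horizontal at both ends) and the exact structure of the two variable-gadget configurations in Lemma~\ref{lem:variable-gadget}; once it is in hand, the construction above produces a valid guard set of size exactly $m$.
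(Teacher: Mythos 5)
Your stages 1 and 2 match the paper's construction, but stage 3 contains a genuine gap: the parity shift you defer to ``bookkeeping'' cannot be made to work, for a counting reason. On the connection gadget of the chosen satisfied literal there are $b+1$ straight segments $s_1,\dots,s_{b+1}$ to cover. The rook sitting at $A$, $B$, or $C$ covers only $s_1$ (it lies in the row of the horizontal segment $s_1$ but not in the line of the vertical segment $s_2$), the rook you move to the clause-entry tile $K$ covers only $s_{b+1}$ plus the clause column, and each of the remaining $\frac{b}{2}-1$ bend rooks covers exactly two consecutive segments. That accounts for at most $1 + 2\left(\frac{b}{2}-1\right) + 1 = b$ segments, so some segment --- $s_2$ if you re-index to bends $3,5,\dots,b-1$, or $s_{b-1}$ and $s_b$ if you re-index to $1,3,\dots,b-3$ --- is left with unguarded interior tiles whenever it is longer than two tiles. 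The case $b=0$ fails even more plainly: stage 2 places no rooks on that connection, so there is no rook available to shift onto $K$ without increasing the count.

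The paper closes exactly this gap by physically moving the variable-gadget rook off of $A$ (or $B$, $C$) and onto the first bend of each hot connection, or all the way to the clause gadget when $b=0$. A rook at the first bend still guards $A$ and the rest of $A$'s row inside the variable gadget, because that row continues through $s_1$; but it now also covers $s_2$. The remaining rooks then go at bends $3,5,\dots,b-1$ and one on the clause gadget, covering all $b+1$ segments with the same totals of $4$ rooks charged to the variable gadget and $\frac{b}{2}$ to the connection. (The paper also does this for every satisfied literal rather than one chosen literal per clause, which changes nothing in the count.) Your observation about the ``slack'' provided by the rook at $A$, $B$, or $C$ is the right instinct, but that slack is only realized by relocating the rook to the bend, not by leaving it in place.
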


\begin{proof}
Consider a truth assignment that satisfies all the clauses in $C$.  For each variable, if it is true in this truth assignment, we guard the corresponding variable gadget with the four-rook set from Lemma~\ref{lem:variable-gadget} with a rook on tile $A$; if it is false in this truth assignment, we use the set with rooks on $B$ and $C$.

For each edge, we say that the edge joining a variable with a clause is \textbf{\textit{hot}} if the variable is true and it appears in the clause, or if the variable is false and its negation appears on the clause.  So, for each hot edge, we have placed a rook on $A$, $B$, or $C$ of the variable gadget; move that rook sideways along the corresponding connection gadget to land on the first bend (or all the way to the clause gadget if there is no bend).  Then place rooks at every second bend for the rest of the hot edge, ending with one rook on the clause gadget.  For each non-hot edge, we start with a rook at the second bend leading away from the variable gadget (the first bend is already guarded by a rook on the variable gadget), and place rooks at every second bend for the rest of the connection gadget, ending with one rook at the last bend before the clause gadget.

We have placed $4$ rooks on each variable gadget, and on each connection gadget we have placed a number of additional rooks equal to half the number of bends.  In this way the variable gadgets and connection gadgets are guarded.  We also see that the clause gadgets are guarded because the
truth assignment satisfies all the clauses, so each clause has a hot edge, and there is a rook on the clause gadget at the end of that hot edge.
\end{proof}

\begin{lemma}\label{lem:is-satisfiable}
Let $C$, $P(C)$, and $m$ be as in Lemma~\ref{lem:can-place-rooks}.  If the polyomino $P(C)$ can be guarded by $m$ rooks, then the $\psat$  instance $C$ is satisfiable.
\end{lemma}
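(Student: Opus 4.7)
The plan is to reverse-engineer the construction of Lemma~\ref{lem:can-place-rooks}: I will show that any guarding of $P(C)$ by $m$ rooks is forced into a rigid structure that encodes a satisfying assignment. First, I would establish pointwise lower bounds summing to $m$. Each variable gadget requires at least four rooks in itself (or in the initial tiles of its three incident connection gadgets): the five tiles $W, X, R, S, T$ lie only in rows and columns that stay inside the variable gadget, and by Corollary~\ref{cor} no three-rook set can dominate the whole variable gadget. Each connection gadget with $2k_e$ bends requires at least $k_e$ rooks, because a rook covers at most two of the $2k_e + 1$ maximal horizontal/vertical segments of the path (namely the segments through its row and its column, which meet only at a bend), forcing rooks at roughly every second bend. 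A careful charging argument shows these per-gadget lower bounds add up to exactly $m$, so equality must hold everywhere in the cover.

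With equality in hand, I would apply Lemma~\ref{lem:variable-gadget} and Corollary~\ref{cor} to classify each variable gadget's four-rook configuration as either the \emph{$A$-type} $\{A, X, Y, Z\}$ or the \emph{$BC$-type} $\{B, C, U, S\}$, and declare the corresponding variable TRUE in the first case and FALSE in the second. For each connection gadget, the tight rook count forces the rooks to sit at alternating bends. Whether the final rook lands on the clause-gadget tile or one bend short is determined by whether the variable end of the connection has a rook on its anchor tile ($A$, $B$, or $C$) covering the connection's first segment; this happens exactly when the literal represented by that edge is true under our assignment, i.e.\ when the edge is ``hot''.

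Finally, in every clause gadget the middle tile (or tiles) lies in a row of $P(C)$ containing no other polyomino tile, so it can only be guarded by a rook in the clause column, which consists entirely of clause-gadget tiles. By the equality count, the only rook that can sit in the clause column is the terminal rook of a hot incoming connection. Hence each clause has at least one hot incoming edge, so at least one literal of each clause is true, and the constructed assignment satisfies $C$. The main obstacle is the lower-bound/charging step: a rook in a connection gadget can also help guard tiles of the adjacent variable gadget, so the proof that four rooks per variable and $k_e$ per connection cannot be reduced must carefully handle this boundary interaction---perhaps by treating each variable gadget together with the initial tiles of its three connections as an extended gadget that genuinely requires four rooks.
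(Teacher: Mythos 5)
Your proposal follows essentially the same route as the paper's proof: the same per-gadget counting (four rooks for each variable gadget extended through the first bends of its connections, $b/2$ per connection via the move-rooks-to-the-next-bend observation), the same use of Corollary~\ref{cor} to rule out a variable being simultaneously true and false, and the same conclusion that each clause gadget's forced guard is the terminal rook of a hot edge. The only caveat is your claim that every tight four-rook cover of a variable gadget is exactly the $A$-type or the $BC$-type configuration, which is stronger than what Lemma~\ref{lem:variable-gadget} gives (there are four-rook covers using none of $A$, $B$, $C$); the paper avoids needing this by defining the truth assignment from the hot edges at the clause ends rather than from the gadget configurations, and your version needs only the analogous patch of assigning such variables arbitrarily.
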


\begin{proof}
Suppose we have a set of $m$ rooks that guards $P(C)$.  Then each clause gadget has at least one guard.  We can assume that it is on one of the squares at the end of a connection gadget (if not, we can move it and everything is still guarded).  Similar to the proof of Lemma~\ref{lem:can-place-rooks}, we call those edges the hot edges.  We will deduce the truth assignment for which these edges have the same property that defined the hot edges of Lemma~\ref{lem:can-place-rooks}.

We make the following observation.  Given a connection gadget with $b$ bends, if one end is already guarded up through the first bend, then we can guard the remainder with $\frac{b}{2}$ rooks, but if so, there cannot be a rook past the last bend.  To prove this, we can move every rook that appears between bends to the next bend, and then observe that at least every second bend needs a rook.

Next we apply this connection observation to each edge.  For each hot edge, we think of the first end as being the one at the clause gadget, which is already guarded by that first rook.  The connection observation implies that there must be at least $\frac{b}{2}$ more rooks on that connection gadget.  For each edge that is not hot, we think of the first end as being the one at the variable gadget, which may already be guarded; the connection observation implies that there must be at least $\frac{b}{2}$ rooks on the connection gadget itself.

Corollary~\ref{cor} implies that there must be at least four rooks on each variable gadget or up through the first bend of each connection gadget coming out of it.  For each hot edge, we need one guard to be shared by the variable gadget and the connection gadget, and this guard has to be at the first bend of the connection gadget, otherwise the connection observation implies that we do not have enough guards.  We say the variable is true if it has a hot edge coming into the $A$ side, and say the variable is false if it has a hot edge coming into the $B$ or $C$ side.  Then Corollary~\ref{cor} implies that the variable cannot be both true and false if the gadget is guarded by only four rooks.  Thus, with $m = 4n + \frac{\ell}{2}$ rooks in total, we have accounted for all the rooks, and so we have found a well-defined truth assignment that satisfies our $\psat$ instance $C$.
\end{proof}

\begin{proof}[Proof of Theorem~\ref{teo:nphard-rook}]
The polyomino $P(C)$ from Lemma~\ref{lem:polynomial-size} not only is of polynomial size, but in fact can be computed in polynomial time, because finding a visibility representation of a planar graph can be done in linear time~\cite{Tamassia86}.  Thus there is a polynomial-time algorithm to convert any instance $C$ of $\psat$ into a polyomino $P(C)$ and a number $m = 4n + \frac{\ell}{2}$, for which Lemmas~\ref{lem:can-place-rooks} and~\ref{lem:is-satisfiable} imply that the rook-visibility guard set problem for $P(C)$ and $m$ is equivalent to the $\psat$ problem for $C$.
The $\psat$ problem is NP-hard~\cite{Cerioli08}, and thus the rook-visibility guard set problem must also be NP-hard.
\end{proof}

The proof of Theorem~\ref{teo:nphard-queen} about queens is similar, but the variable gadget is different and we need to take a little bit more care with the connection gadgets.  First we give the queen-variable gadget, shown in Figure~\ref{fig:coveringwiththree}, and show that it has the properties we need for the rest of the proof. 

\begin{figure}[h]
\begin{center}
\begin{tikzpicture} 
\foreach \x/\y in {0/2, 1/2, 2/2, 2/0, 2/1,  2/3, 2/4, 3/1, 3/3, 4/0 , 4/1, 4/3, 4/4, 5/1 ,  5/3, 6/1, 6/3} { 
\path [draw=gray, fill=gray] (-2+.5+\x-0.45, .5+\y-0.45)
-- ++(0,.9)
-- ++(.9,0)
-- ++(0,-.9)
--cycle;
}

\node[anchor=west] at (-2+0.05, 2+.5){A};
\node[anchor=west] at (0.05, 2+.5){Q};
\node[anchor=west] at (0.05, 4+.5){X};
\node[anchor=west] at (2+0.05, 4+.5){Y};
\node[anchor=west] at (1+0.05, 3+.5){R};
\node[anchor=west] at (2+0.05, 3+.5){S};
\node[anchor=west] at (3+0.05, 3+.5){T};
\node[anchor=west] at (4+0.05, 3+.5){B};
\node[anchor=west] at (4+0.05, 1+.5){C};
\node[anchor=west] at (0.05, 0+.5){Z};
\node[anchor=west] at (2+0.05, 0+.5){W};
\node[anchor=west] at (1.05, 1+.5){L};
\node[anchor=west] at (2+0.05, 1+.5){U};
\node[anchor=west] at (3+0.05, 1+.5){V};
\end{tikzpicture}
\begin{tikzpicture}

\foreach \x/\y in { 0/2, 1/2, 2/2, 2/0, 2/1,  2/3, 2/4, 3/1, 3/3, 4/0 , 4/1, 4/3, 4/4, 5/1 ,  5/3, 6/1, 6/3} { 
\path [draw=gray, fill=gray] (6.5+\x-0.45, .5+\y-0.45)
-- ++(0,.9)
-- ++(.9,0)
-- ++(0,-.9)
--cycle;
}

\node[anchor=west] at (5+1.05, 2+.5) {\queen};
\node[anchor=west] at (6+3.05, 3+.5) {\queen};
\node[anchor=west] at (6+3.05, 1+.5) {\queen};

\node[anchor=west] at (7+5+.05, 3+.5){B};
\node[anchor=west] at (7+5+.05, 1+.5){C};

\foreach \x/\y in {0/2, 1/2, 2/2, 2/0, 2/1,  2/3, 2/4, 3/1, 3/3, 4/0 , 4/1, 4/3, 4/4, 5/1 ,  5/3, 6/1, 6/3} { 
\path [draw=gray, fill=gray] (14.5+\x-0.45, .5+\y-0.45)
-- ++(0,.9)
-- ++(.9,0)
-- ++(0,-.9)
--cycle;
}
\node[anchor=west] at (10+5+1.05, 2+.5) {\queen};
\node[anchor=west] at (11+6+3.05, 3+.5) {\queen};
\node[anchor=west] at (11+6+3.05, 1+.5) {\queen};
\node[anchor=west] at (13+0+1.05, 2+.5){A};

\end{tikzpicture}
\end{center}
\caption{Labeled queen-variable gadget and unique three-queen-guards sets.}
\label{fig:coveringwiththree}
\end{figure}
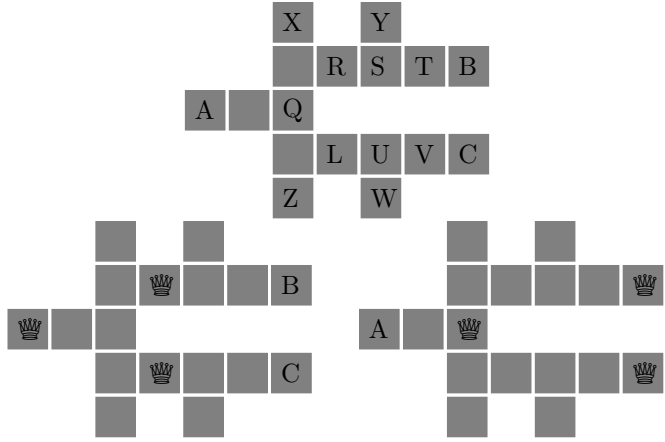

\begin{lemma}\label{lem:queen-variable}
The queen-variable gadget cannot be guarded by two queens.  It can be guarded by three queens, of which one is at $A$, or by three queens, of which two are at $B$ and $C$.  It cannot be guarded by three queens, of which two are at $A$ and $B$ or at $A$ and $C$.
\end{lemma}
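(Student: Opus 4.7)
The plan is to prove the lemma by combining a direct verification of the two existence claims with a disjoint guarder-set argument for the two impossibility claims.

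The key preliminary is to compute, for each of the three outer tiles $A$, $B$, $C$, the complete set of tiles from which a queen can guard it. The gadget's geometry makes this easy: every column among $x=3,4,5,6$ is interrupted by a gap at $y=2$; the row $y=2$ contains only the three tiles of the left arm; and the diagonals through $A$, $B$, $C$ immediately leave the gadget. I expect this analysis to yield
\[
\mathcal{A}=\{A,M,Q\},\qquad \mathcal{B}=\{O,R,S,T,B\},\qquad \mathcal{C}=\{N,L,U,V,C\},
\]
where $M$, $N$, $O$ denote the three unlabeled tiles (the middle tile of the left arm and the two tiles of the central column $x=2$ immediately above and below $Q$). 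These three sets lie in rows $y=2,3,1$ respectively and are therefore pairwise disjoint, so any guarding set must contain at least three queens, one in each set; this proves the first assertion.

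For the existence claims I would simply tabulate, for each queen in the configurations $\{A,L,R\}$ and $\{Q,B,C\}$ of Figure~\ref{fig:coveringwiththree}, the tiles seen along its row, column, and two diagonals, truncating each sight line at the appropriate column gap, and check that the unions exhaust all $17$ tiles of the gadget. For the impossibility claims, suppose that some three-queen guarding contains both $A$ and $B$. Then these two queens together cover only $\{A,M,Q\}\cup\{B,O,R,S,T\}$, leaving in particular $Y$ and $C$ unguarded. A second guarder-set computation gives $\mathcal{Y}=\{Q,R,S,T,Y\}$ (row $y=4$ is broken at $(3,4)$, column $x=4$ is broken at $(4,2)$, and the two diagonals are truncated at $(1,1)$ and at $(6,2)$), and this is disjoint from $\mathcal{C}$; hence no single third queen can guard both $Y$ and $C$, a contradiction. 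The case $\{A,C\}$ then follows from the reflectional symmetry $y\mapsto 4-y$ of the gadget, which fixes the left arm and swaps $B\leftrightarrow C$ and $Y\leftrightarrow W$, so that the pair $(Y,C)$ is exchanged with $(W,B)$, whose guarder sets are again disjoint by the mirror image of the same computation.

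The main obstacle, such as it is, is bookkeeping: for each relevant tile I must trace the row, column, and both diagonals through it and stop at precisely the right column gap at $y=2$. Once the guarder sets $\mathcal{A},\mathcal{B},\mathcal{C},\mathcal{Y}$ and the mirror pair are in hand, all four assertions of the lemma reduce to elementary disjointness checks or direct visibility tabulations over the $17$ tiles of the gadget.
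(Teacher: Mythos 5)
Your proof is correct, and the tile-by-tile computations it rests on all check out: the guarder sets $\mathcal{A}=\{A,M,Q\}$, $\mathcal{B}=\{O,R,S,T,B\}$, $\mathcal{C}=\{N,L,U,V,C\}$, and $\mathcal{Y}=\{Q,R,S,T,Y\}$ are exactly right, the configurations $\{A,R,L\}$ and $\{Q,B,C\}$ do cover all $17$ tiles, and the reflection $y\mapsto 4-y$ is indeed a symmetry of the gadget fixing $A$ and swapping $B\leftrightarrow C$, $Y\leftrightarrow W$. Your lower-bound argument is the same as the paper's (the paper phrases the disjointness of $\mathcal{A},\mathcal{B},\mathcal{C}$ as ``no two of $A$, $B$, $C$ can be guarded by the same queen''), but you diverge on the final impossibility claim. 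The paper proves the stronger statement that each of the two three-queen guarding sets is \emph{unique} --- with a queen at $A$ the other two are forced to $R$ and $L$, and with queens at $B$ and $C$ the third is forced to $Q$ --- and then reads off the incompatibility of $A$ with $B$ or $C$ as an immediate corollary. You instead argue directly: queens at $A$ and $B$ see only $\mathcal{A}\cup\mathcal{B}$ (using the symmetry of the visibility relation), which leaves $Y$ and $C$ uncovered, and $\mathcal{Y}\cap\mathcal{C}=\emptyset$ rules out a single third queen; the $\{A,C\}$ case follows by the mirror symmetry. Your route is more economical for the lemma as stated, since it skips the case analysis needed for uniqueness; the paper's route buys the uniqueness of the configurations, which is not needed downstream (the NP-hardness reduction only uses the incompatibility of $A$ with $B$ or $C$ and the lower bound of three) but makes the analogy with the rook gadget of Lemma~\ref{lem:variable-gadget} exact.
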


\begin{proof}
That at least three queens are needed to guard the gadget polyomino follows from the fact that no two of the tiles $A$, $B$, or $C$ can be guarded by the same queen.

We show that there is a unique way to guard the polyomino gadget with three queens if a queen is placed at $A$. If a queen is placed at $A$, then none of the tiles $C$, $B$, $X$, $Y$, $Z$, and $W$ are guarded by this queen. First, we notice that to guard tiles $C$ and $B$ one queen must be placed at some tile along the row containing $B$ and another queen must be placed at some tile along the row containing $C$. Also, it is not possible to guard both $Y$ and $W$ with the same queen unless this queen is at $Q$, but if a queen is placed at $Q$ then it will not be in either of the rows containing $B$ or $C$. Thus, to guard $Y$ a queen must be placed at either $R$, $S$, or $T$, and to guard $W$ a queen must be placed at either $L$, $U$, or $V$. The only options that also allow to cover tiles $X$ and $Z$ is to place these two queens at $R$ and $L$. This gives the unique way to guard the gadget polyomino with three queens if one is placed at $A$.

Now, we prove that there is a unique way to guard the polyomino gadget with three queens if two of these queens are placed at $B$ and $C$.  If two queens are placed at $B$ and $C$, then tiles $X$, $A$, and $Z$ are not guarded. The only way to guard these tiles with one more queen is to place it at tile $Q$, and this position of the third queen guards the rest of the gadget polyomino tiles. This gives the unique way of guarding the gadget polyomino if two queens are placed at $B$ and $C$.

Finally, as a consequence of the uniqueness of these two three-queen-guard sets, there is no way to guard the polyomino gadget with three queens in such a way that a queen is at $A$ and another queen is either at $B$ or $C$.
\end{proof}

\begin{proof}[Proof of Theorem~\ref{teo:nphard-queen}]
Lemma~\ref{lem:polynomial-size}, about how to construct the polyomino associated to an instance of $\psat$, applies to queens if we use the queen-variable gadget.  We also need to make sure that for each connection gadget, there must be at least three tiles in each segment between bends and in each segment between the last bend and a variable or clause gadget.  This is not a concern in the subdivided edge squares, because there is already plenty of space between bends, and only applies to the subdivided node squares---we may need more than a $15$ by $15$ subdivision of all squares.

Then Lemmas~\ref{lem:can-place-rooks} and~\ref{lem:is-satisfiable} apply to queens, if we replace $m = 4n + \frac{\ell}{2}$ by $m = 3n + \frac{\ell}{2}$ to account for the fact that the queen-variable gadget needs three queens and not four.  In Lemma~\ref{lem:is-satisfiable}, we use the spacing of the bends in the connection gadgets to ensure that a segment between two bends cannot be guarded entirely by queens beyond those two bends.

Using these lemmas as modified for the queen case, we have shown that for every instance of $\psat$, we can compute in polynomial time a polynomial-size instance of the queen-visibility guard set problem that is equivalent.  Thus the queen-visibility guard set problem is NP-hard.
\end{proof}

\section{Non-Attacking Rooks on Polyominoes}\label{sec:nonattacking}

In this section we study the special case of rooks on 2--dimensional polyominoes.  In this case we have the convenient tool of being able to encode the attacking relationships between rooks as a bipartite graph.  We first construct this bipartite graph, and then in the rest of the section we give some theorems in which we use the properties of the bipartite graph to deduce properties of the rook placements on polyominoes.

In the bipartite graph, the two partite sets are the set of rows and the set of columns of the polyomino.  To be precise, we say that two tiles $X$ and $Y$ in a polyomino $P$ are in the same \textbf{\textit{row}} if the line connecting their centers is horizontal and contained in $P$. Similarly, two tiles $X$ and $Y$ of a polyomino $P$ are in the same \textbf{\textit{column}} if the line connecting their centers is vertical and contained in $P$.  This definition of row and column may be counter-intuitive, in that two tiles of $P$ may be in the same row or column of the infinite chessboard without being in the same row or column of $P$, but it describes the possible lines of attack of rooks on the polyomino. 
 
Given a polyomino $P$ we denote the corresponding bipartite graph by $G(P)=(V,E)$.  The set of vertices is given by $V=C \cup R$ where $C$ has a vertex for each column and $R$ has a vertex for each row, and the graph includes an edge $(c,r)=(r,c)$ if there is a tile at column $c$ and row $r$. An example of this construction appears in Figure~\ref{fig:examplebipartitegraph}.

\begin{figure}
\begin{center}
\begin{tikzpicture}[scale=1, vert/.style={circle, draw=black, fill=black, inner sep = 0pt, minimum size = 1mm}]
\foreach \x/\y in {  0 / 2, 1 / 2,   2 / 2, 0 / 4, 0 / 3, 1 / 4, 2 / 4, 2 / 3 } { 
\path [draw=gray!, fill=gray!] (-8+3.5+\x-0.45, .5+\y-1.45)
-- ++(0,.9)
-- ++(.9,0)
-- ++(0,-.9)
--cycle;
}

\node[anchor=west] at (-6+.15, 3+.5){$r_1$};
\node[anchor=west] at (-6+.15, 2+.5){$r_2$};
\node[anchor=west] at (-2+.035, 2+.5){$r_3$};
\node[anchor=west] at (-6+.15, 1+.5){$r_4$};

\node[anchor=west] at (-5+.05, 4+.3){$c_1$};
\node[anchor=west] at (-4+.05, 4+.3){$c_2$};
\node[anchor=west] at (-4+.05, .6){$c_3$};
\node[anchor=west] at (-3+.05, 4+.3){$c_4$};

\foreach \x/\y in {  0 / 2, 1 /1, 1/2, 1/3,   2 / 0, 2 / 1, 2/2, 3/1} { 
\path [draw=gray!, fill=gray!] (7.5+\x-0.45-1, 1.5+\y-1.45)
-- ++(0,.9)
-- ++(.9,0)
-- ++(0,-.9)
--cycle;
}

\node[anchor=west] at (6+.15-1, 3+.5){$r_2$};
\node[anchor=west] at (6+.15-1, 2+.5){$r_1$};
\node[anchor=west] at (6+.15-1, 0+.5){$r_3$};
\node[anchor=west] at (6+.15-1, 1+.5){$r_4$};

\node[anchor=west] at (7+.05-1, 4+.3){$c_2$};
\node[anchor=west] at (8+.05-1, 4+.3){$c_1$};
\node[anchor=west] at (9+.05-1, 4+.3){$c_4$};
\node[anchor=west] at (10+.05-1, 4+.3){$c_3$};

\node[vert] at (1, 1) [label=left: $r_1$] {};
\node[vert] at (3, 1) [label=right: $c_1$] {};
\node[vert] at (1, 2) [label=left: $r_2$] {};
\node[vert] at (3, 2) [label=right: $c_2$] {};
\node[vert] at (1, 3) [label=left: $r_3$] {};
\node[vert] at (3, 3) [label=right: $c_3$] {};
\node[vert] at (1, 4) [label=left: $r_4$] {};
\node[vert] at (3, 4) [label=right: $c_4$] {};

\draw (1, 1)--(3, 1) (1, 1)--(3, 2) (1, 1)--(3, 4) (1, 2)--(3, 1) (1, 3)--(3, 4) (1, 4)--(3, 4) (1, 4)--(3, 3) (1, 4)--(3, 1);
\end{tikzpicture}
\caption{Given a polyomino $P$, the corresponding bipartite graph $G(P)$ has one vertex for each row and column, and one edge for each tile.  The two polyominoes shown both give the same bipartite graph.}\label{fig:examplebipartitegraph}
\end{center}
\end{figure}
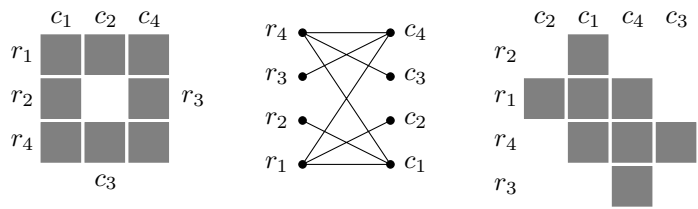

This bipartite graph is analogous to the one for grid art galleries in~\cite{Ntafos86}, and a similar, but not identical, bipartite graph construction appears in the paper~\cite{Dailly2018}, which studies the two-player game of placing non-attacking rooks one at a time until the rooks guard the entire polyomino. Another object that records non-attacking rook positions is the chessboard complex, which has one vertex for each tile, and connects sets of tiles that constitute non-attacking positions~\cite{bjorner1994chessboard}. It is possible to construct the analogue of the chessboard complex using non-attacking positions for other chess pieces. As far as we know, this is the first time that this chessboard complex is studied in polyominoes that are not rectangular polyominoes. For the purposes of this paper, the bipartite graph $G(P)$ is the most useful model but the results can be interpreted in terms of the chessboard complex.

The information from the bipartite graph tells us many properties of the polyomino, but studying polyominoes is not identical to studying bipartite graphs.  In Figure \ref{fig:examplebipartitegraph} we show two polyominoes that are not congruent or even topologically equivalent, but their bipartite graphs are isomorphic. And, not all bipartite graphs correspond to polyominoes.  For the bipartite graph $C_6$, the cycle with six vertices, there does not exists a polyomino $P$ such that $G(P)=C_6$. 





The bipartite graph construction is especially useful for studying sets of non-attacking rooks, which correspond to matchings in the graph.  A set of rooks is \textbf{\textit{non-attacking}} if no two rooks are in the same row or column (in the sense defined above), and a \textbf{\textit{matching}} in any graph is a set of edges that do not share any vertices.  Thus, efficient algorithms for finding maximum matchings in bipartite graphs also give maximum non-attacking rook sets in polyominoes.


 
\begin{theorem}\label{teo:polynomial}
There is a polynomial time algorithm to find the maximum number of non-attacking rooks on a polyomino.
\end{theorem}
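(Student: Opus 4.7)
The plan is to reduce the maximum non-attacking rook problem directly to maximum bipartite matching, using the bipartite graph $G(P)=(R\cup C, E)$ already constructed in the paper. First, given a polyomino $P$ with $n$ tiles, I would build $G(P)$ in polynomial time by a sweep: group the tiles of $P$ into maximal horizontal runs to obtain the row-vertices, and into maximal vertical runs to obtain the column-vertices, recording for each tile the row and column it belongs to. This uses the paper's definition, in which two tiles lie in the same row (resp.\ column) only when the segment joining their centers is contained in $P$. The resulting graph has at most $n$ vertices on each side and exactly $n$ edges (one per tile), and is constructed in time polynomial in $n$.

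The key observation is the bijection between non-attacking rook sets on $P$ and matchings in $G(P)$. A rook on tile $t$, whose row is $r$ and column is $c$, corresponds to the edge $\{r,c\}\in E$. Two rooks attack each other precisely when they share a row or a column of $P$, which is precisely when the corresponding edges share a vertex of $G(P)$. Hence a set of rooks is non-attacking if and only if the corresponding edge set is a matching, and the maximum cardinality of a non-attacking rook set on $P$ equals the matching number $\nu(G(P))$.

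Finally, I would compute $\nu(G(P))$ by any polynomial-time bipartite matching algorithm, for instance Hopcroft--Karp, which runs in $O(\sqrt{|V|}\,|E|)=O(n^{3/2})$ time. Composing the construction of $G(P)$ with the matching algorithm produces a polynomial-time algorithm returning the maximum number of non-attacking rooks on $P$, as required.

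There is no real obstacle in this argument: the only point that needs to be checked carefully is that the definition of rows and columns used in $G(P)$ really captures the attacking relation between tiles of $P$, and this is immediate from the paper's definition, since a rook on one tile attacks another tile exactly along segments contained in $P$. The rest is bookkeeping plus invoking the standard polynomial-time algorithm for maximum bipartite matching.
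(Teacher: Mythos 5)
Your proposal is correct and follows exactly the same route as the paper: encode the polyomino as the bipartite graph $G(P)$ on rows and columns, identify non-attacking rook sets with matchings, and invoke a polynomial-time maximum bipartite matching algorithm. The extra detail you supply (constructing $G(P)$ by sweeping maximal runs, and checking the bijection) only makes explicit what the paper leaves implicit.
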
 

\begin{proof}
 The maximum number of non-attacking rooks on a polyomino $P$ is equal to the maximum number of edges in a matching in the bipartite graph $G(P)$.  A maximum matching in a bipartite graph can be found in polynomial time using an augmenting-path algorithm.
\end{proof}

The theorem above can be strengthened to address a rook analogue of the problem from~\cite{gent2017complexity}, which says that it is NP-hard to decide whether a given set of non-attacking queens can be extended.  In the rook case, the problem is polynomial-time solvable, even on polyominoes.

\begin{theorem}\label{teo:extend}
There is a polynomial time algorithm to decide for any set of $m$ non-attacking rooks on a polyomino and any $k>0$, if it can be extended to a set of $m+k$ non-attacking rooks.
\end{theorem}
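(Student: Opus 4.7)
The plan is to reduce this extension problem to an ordinary maximum bipartite matching problem in the graph $G(P)$, exactly as in the proof of Theorem~\ref{teo:polynomial}, but restricted to the portion of $G(P)$ that has not yet been ``used'' by the given rook set.

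First, I would translate the given set of $m$ non-attacking rooks into a matching $M$ in $G(P)$: each rook on the tile lying in row $r$ and column $c$ becomes the edge $(r,c)$, and because no two of the given rooks share a row or column, $M$ is indeed a matching of size $m$. Extending the rook set to a set of $m+k$ non-attacking rooks corresponds exactly to finding $k$ additional edges in $G(P)$ that, together with $M$, still form a matching; equivalently, these $k$ new edges must avoid every vertex of $G(P)$ incident to an edge of $M$.

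Second, I would construct the induced subgraph $G'$ of $G(P)$ obtained by deleting the $2m$ vertices of $G(P)$ that are endpoints of edges of $M$ (together with all their incident edges). By construction, matchings of size $k$ in $G'$ are in bijection with sets of $k$ additional non-attacking rooks that may be legally added to the given configuration. Hence the given rook set extends to a non-attacking set of size $m+k$ if and only if $G'$ contains a matching of size at least $k$.

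Finally, I would invoke a polynomial-time maximum-matching algorithm for bipartite graphs (for example the Hopcroft--Karp algorithm) on $G'$, compute the size of a maximum matching $\mu(G')$, and output ``yes'' iff $\mu(G') \geq k$. Building $G(P)$ from $P$, deleting $2m$ vertices to obtain $G'$, and running maximum matching are each polynomial in the size of $P$ and of the input rook set, so the whole procedure runs in polynomial time. There is no real obstacle here: the only point that requires a line of justification is that deleting the used row- and column-vertices (rather than merely deleting the $m$ edges of $M$) is what correctly enforces the non-attacking condition between the old rooks and the new ones; once that is observed, the theorem follows immediately from the polynomial-time solvability of maximum bipartite matching.
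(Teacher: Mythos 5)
Your proposal is correct and follows essentially the same route as the paper: both translate the given rooks into a matching $M$ in $G(P)$, delete the matched vertices together with their incident edges to form $G'$, and then decide the question by computing a maximum matching in $G'$ with a polynomial-time bipartite matching algorithm. The extra remark you make about why vertices (not just edges) must be deleted is a sensible clarification but does not change the argument.
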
 

\begin{proof}
It suffices to show how to decide, for any matching of size $m$ in a bipartite graph $G$ and any $k>0$, whether it can be extended to a matching of size $m + k$.  Let $M$ be the matching of size $m$.  Let $G'$ be the graph obtained by deleting $M$, all vertices in $M$, and all edges incident to those vertices, from $G$.  Then $G$ has a matching of size $m+k$ that contains $M$ if and only if $G'$ has a matching of size $k$.  Thus we may solve the problem by running an augmenting-path algorithm on $G'$.
\end{proof}

These theorems show that finding maximum-size sets of non-attacking rooks is computationally easy.  What about if instead we consider the possible sizes of sets of non-attacking rooks that are maximal, in the sense that no more rooks can be added without destroying the non-attacking property?  Recall that a set of rooks is \textbf{\textit{dominating}} if it guards every square of the polyomino.  Thus, a set of non-attacking rooks is maximal if and only if it is dominating.  The next lemma shows that the problem of finding the minimum size of maximal set of non-attacking rooks is the same as finding the minimum number of rooks to guard the polyomino, and thus is NP-hard as we have shown in Theorem~\ref{teo:nphard-rook}.

\begin{lemma}\label{lemma:nonattacking}
If there exists a set of $m$ rooks that dominates a polyomino $P$, then there exists a set of at most $m$ non-attacking rooks that dominates $P$.  
\end{lemma}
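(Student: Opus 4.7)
The plan is to choose $S'$ by a lexicographic minimization: among all dominating sets of $P$ of size at most $m$, pick one of minimum cardinality, and among those, one minimizing the number of pairs of rooks that attack each other. Automatically $|S'| \leq m$, so it suffices to show that $S'$ is non-attacking. I would argue by contradiction: suppose $S'$ contains two rooks $\pi_1, \pi_2$ on a common line; by the symmetry between rows and columns we may take $\pi_1 = (\rho_1, \gamma)$ and $\pi_2 = (\rho_2, \gamma)$ in a common column $\gamma$ of $P$.

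From the cardinality-minimality of $S'$, the rook $\pi_1$ cannot be deleted, so there is a tile $t_1$ that is guarded in $S'$ only by $\pi_1$. Since $\pi_2$ already guards every tile of column $\gamma$, the tile $t_1$ cannot lie in column $\gamma$; therefore $\pi_1$ must guard $t_1$ through its row, which gives $t_1 = (\rho_1, \gamma')$ with $\gamma' \neq \gamma$. The uniqueness of guarding of $t_1$ then forces two further constraints: no rook of $S' \setminus \{\pi_1\}$ sits on row $\rho_1$, and no rook of $S'$ sits on column $\gamma'$.

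With these in hand, I would perform the local exchange $S'' := (S' \setminus \{\pi_1\}) \cup \{t_1\}$, i.e., slide the rook from $(\rho_1, \gamma)$ to $(\rho_1, \gamma')$. Domination is preserved: column $\gamma$ is still guarded by $\pi_2$, the new rook guards row $\rho_1$ and column $\gamma'$, and every other tile was already guarded by some rook of $S' \setminus \{\pi_1\}$. The size is unchanged. But the number of attacking pairs drops strictly: the pair $(\pi_1, \pi_2)$, along with any other attacks between $\pi_1$ and further rooks in column $\gamma$, is eliminated, while the new rook at $t_1$ is alone on both row $\rho_1$ and column $\gamma'$ and hence attacks nobody in $S''$. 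This contradicts the minimality of attacking pairs in $S'$, forcing $S'$ to be non-attacking as required.

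The main delicate point is the second half of the witness-tile argument: without establishing that both row $\rho_1$ and column $\gamma'$ are free of other rooks of $S'$, the proposed exchange could simply transfer a conflict elsewhere rather than remove one. Both exclusions follow from the fact that $t_1$ is guarded only by $\pi_1$, but it must be applied carefully since a rook on row $\rho_1$ or column $\gamma'$ would give $t_1$ a second guard, and one has to note separately that $\pi_1$ itself lies on column $\gamma \neq \gamma'$ so its removal vacates column $\gamma'$ entirely.
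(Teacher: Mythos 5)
Your proof is correct, and it reaches the conclusion by a genuinely different route than the paper. The paper's proof processes the $m$ rooks one at a time, maintaining the invariant that the first $i$ (possibly relocated or deleted) rooks are non-attacking while the full set still dominates: a rook that conflicts with an earlier rook only along its row is slid along its column to a tile unguarded by the earlier rooks, and is deleted (moved ``off the board'') if no such tile exists or if it conflicts in both directions. You instead make a single extremal choice --- minimize cardinality, then the number of attacking pairs --- and rule out any remaining attack by a local exchange. The two arguments share the same essential local move (slide a conflicting rook along its conflict-free line to a tile that only it guards), but your version absorbs the paper's explicit deletion step into the first layer of the minimization, and in exchange you must verify the slightly stronger fact, which you do, that the witness tile $t_1$ forces both the row $\rho_1$ and the target column $\gamma'$ to be free of other rooks, so that the exchange strictly decreases the number of attacking pairs rather than merely relocating a conflict. (Your use of ``row'' and ``column'' implicitly means the paper's polyomino-relative maximal runs, which is exactly what makes ``$\pi_2$ guards every tile of column $\gamma$'' true; this is consistent with Section~\ref{sec:nonattacking}.) The paper's version is directly algorithmic; yours is shorter to verify and avoids the ``off the board'' bookkeeping, at the cost of being nonconstructive as stated, though the decreasing potential makes it easy to run as an algorithm too.
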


\begin{proof}
 Let $t_1, \ldots, t_m$ be the positions of the $m$ rooks that guard $P$.  We consider the rooks in order, moving them around or removing them so that at the $i$th step, the first $i$ rooks are non-attacking.  More precisely, we formally create a new ``off the board'' position that does not attack anything.  Then we find new positions $t'_1, \ldots, t'_m$ that may be off the board, with the property that for each $i$, the set of rooks at positions $t'_1, \ldots, t'_i, t_{i+1}, \ldots, t_m$ still guards $P$, and the set of rooks at positions $t'_1, \ldots, t'_i$ is non-attacking.
 
 To do this, suppose that we have chosen $t'_1, \ldots, t'_{i-1}$.  If $t_i$ is not in the same row or column as any of these earlier rooks, then set $t'_i = t_i$.  If $t_i$ is in the same row as one of the earlier rooks, and also in the same column as one of the earlier rooks, then set $t'_i$ to be off the board.  If $t_i$ is in the same row as one of the earlier rooks, but not in the same column as any of them, then we choose $t'_i$ to be a tile in the same column as $t_i$ that is not guarded by any of the earlier rooks, if such a position exists.  If every tile in the same column as $t_i$ is guarded by one of $t'_1, \ldots, t'_{i-1}$, then we set $t'_i$ to be off the board.  A similar procedure applies to the case where $t_i$ is in the same column as an earlier rook, but not in the same row as any of them.  In any of these cases we can cause $t'_1, \ldots, t'_i$ to be non-attacking while still guarding the same squares as before and maybe more.
 
 After we have finished this process, the rooks at positions $t'_1, \ldots, t'_m$, excluding those that are off the board, guard all of the polyomino $P$ and are non-attacking.
 \end{proof}

Note that this lemma statement is false for queens and for higher-dimensional rooks.  For instance, there are polycubes that can be guarded by two higher-dimensional rooks but these rooks must attack each other, and similarly for queens.

The next theorem shows that although by Theorem~\ref{teo:nphard-rook} it is NP-hard to determine whether $k$ non-attacking rooks can guard all the squares of a given polyomino, we can still use the bipartite graphs to say something about the values of $k$ for which this is possible.

\begin{theorem}\label{teo:interval}
Represent by $min(P)$ the minimum number of non-attacking rooks that dominate a polyomino $P$ and by $max(P)$ the maximum number of non-attacking rooks that can be placed on $P$. Then, $min(P) \geq \ceil{max(P)/2}$ and for any $min(P) \leq k \leq max(P)$ there exists a dominating set of $k$ non-attacking rooks on $P$.
\end{theorem}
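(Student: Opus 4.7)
The plan is to translate everything into the bipartite graph $G(P)$. A set of rooks is non-attacking exactly when the corresponding set of edges forms a matching $M$ in $G(P)$, since two rook tiles share a row or a column of $P$ iff the corresponding edges share a vertex in $R$ or in $C$. Moreover, a non-attacking rook set is dominating iff for every tile, i.e.\ every edge $(r,c)\in E(G(P))$, at least one of the vertices $r$ or $c$ is already incident to some edge of $M$; in other words, iff the vertex set $V(M)\subseteq R\cup C$ of the matching $M$ is a vertex cover of $G(P)$.

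With this dictionary, the inequality $min(P)\ge \ceil{max(P)/2}$ follows immediately from K\"onig's theorem. If $D$ is a minimum dominating non-attacking rook set, then $V(D)$ is a vertex cover of $G(P)$ of size exactly $2|D|$. By K\"onig's theorem in bipartite graphs, the minimum size of a vertex cover equals the maximum matching size $max(P)$, so $2\,min(P) = |V(D)|\ge max(P)$, giving $min(P)\ge \ceil{max(P)/2}$.

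For the interval property, the main idea is that bipartite augmenting paths preserve the dominating property. Suppose $M$ is a dominating non-attacking set of size $k<max(P)$. Then $M$ is not a maximum matching of $G(P)$, so there exists an $M$-augmenting path $\pi$. Let $M' = M\,\triangle\, E(\pi)$; then $|M'|=k+1$, and since $\pi$ begins and ends at $M$-unmatched vertices, $V(M')=V(M)\cup\{\text{endpoints of }\pi\}\supseteq V(M)$. Because $V(M)$ was already a vertex cover of $G(P)$, so is $V(M')$, and therefore the rook set corresponding to $M'$ is also dominating and non-attacking. Starting from a minimum dominating non-attacking set of size $min(P)$ and iterating this augmentation step yields dominating non-attacking sets of every size from $min(P)$ up to $max(P)$, proving the interval statement.

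The only real content beyond standard bipartite matching theory is the observation that an augmenting step enlarges $V(M)$ rather than shuffling it, which is what allows the dominating property to be preserved; the rest is a straightforward translation into the $G(P)$ model together with K\"onig's theorem.
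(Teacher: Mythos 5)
Your proof is correct and follows essentially the same route as the paper: both translate domination into maximality of the matching in $G(P)$ (equivalently, $V(M)$ being a vertex cover), and both observe that an augmenting step only enlarges the set of matched vertices, hence preserves domination. The only cosmetic difference is that for the inequality you invoke K\"onig's theorem where the paper uses the elementary fact that a maximal matching has at least half the size of a maximum one (and you only actually need the easy weak-duality direction, that any vertex cover is at least as large as any matching).
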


\begin{proof}
 The statement $min(P) \geq \ceil{max(P)/2}$ follows from the well-known fact that maximal matchings differ in size by at most a factor of $2$.  Specifically, let $M_{max}$ be a matching of size $max(P)$ in the bipartite graph $G(P)$, and let $M_{min}$ be a maximal matching of size $min(P)$.  (The property of dominating the polyomino corresponds to being a maximal matching.)  Then every edge of $M_{max}$ shares at least one vertex with an edge in $M_{min}$, in order not to contradict the maximality of $M_{min}$; there are $2\cdot min(P)$ vertices of $M_{min}$, so there are at most $2\cdot min(P)$ vertices in $M_{max}$ and so $min(P) \geq \ceil{max(P)/2}$.
 
 To show that there is a dominating set of $k$ non-attacking rooks for every $k$ between $min(P)$ and $max(P)$, we start with a maximal matching $M_{min}$ of size $min(P)$ and apply an augmenting path algorithm.  This algorithm repeatedly increases the size of the matching by $1$ until it reaches size $max(P)$; we claim that at each step, the matching remains maximal.  Specifically, consider any maximal matching $M$.  Being maximal means that every edge of $G(P)$ is incident to an edge of $M$.  An augmenting path has odd length and alternates between edges not in $M$ and edges in $M$, starting and ending with edges not in $M$.  The matching $M'$ is obtained from $M$ by swapping whether each edge of the augmenting path is in $M$ or not in $M$.  We observe that $M'$ is still maximal, because every vertex of $M$ is still a vertex of $M'$.  Thus the augmenting-path algorithm preserves maximality, and the intermediate stages between $M_{min}$ and a maximum-size matching correspond to dominating sets of $k$ non-attacking rooks for all $k$ between $min(P)$ and $max(P)$.
\end{proof}



If we know that a polyomino admits a given number of non-attacking or dominating rooks, a natural question is to ask when that configuration is unique.  We show in Theorem~\ref{teo:odd-unique} that the maximum-size set of non-attacking rooks is unique in the special case that the polyomino has the smallest possible size that admits a given number of non-attacking rooks.  First we show in Theorem~\ref{teo:max-non-attacking} that if there are $m$ non-attacking rooks, then the polyomino has at least $2m-1$ tiles.  We use Lemmas~\ref{lem:connected} and~\ref{lemma:sizematching} to prove Theorem~\ref{teo:max-non-attacking}.

\begin{lemma}\label{lem:connected}
Given a polyomino $P$, the corresponding graph $G(P)$ is connected.
\end{lemma}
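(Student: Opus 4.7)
The plan is to translate the topological connectivity of the polyomino $P$ into combinatorial connectivity of the bipartite graph $G(P)$. First I would make the elementary observation that for any tile $T$, the row and column of $P$ that contain $T$ (call them $r(T)$ and $c(T)$) are precisely the two endpoints of the edge of $G(P)$ corresponding to $T$; conversely, every vertex of $G(P)$ arises as $r(T)$ or $c(T)$ for some tile $T$, since a row or column of $P$ appears in $G(P)$ only because it contains at least one tile. Thus it suffices to prove that for any two tiles $X$ and $Y$ of $P$, the four vertices $r(X), c(X), r(Y), c(Y)$ all lie in the same connected component of $G(P)$.

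Next I would use the fact that the interior of $P$ is connected to produce a tile sequence $X = T_0, T_1, \ldots, T_k = Y$ in which consecutive tiles share a full edge (as opposed to merely a corner). For each $i$, the edge shared by $T_i$ and $T_{i+1}$ is either vertical, in which case $T_i$ and $T_{i+1}$ are in the same row of $P$ so $r(T_i) = r(T_{i+1})$, or horizontal, in which case $c(T_i) = c(T_{i+1})$. In either case the two edges of $G(P)$ coming from $T_i$ and $T_{i+1}$ share an endpoint, so the four vertices $\{r(T_i), c(T_i), r(T_{i+1}), c(T_{i+1})\}$ lie in one connected component. Chaining these overlaps along the tile path then shows that $r(X), c(X)$ belong to the same component as $r(Y), c(Y)$.

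The only step that is not purely formal is the extraction of an edge-adjacent tile sequence from the hypothesis that the interior of $P$ is connected; this is where one must rule out tile pairs that meet only at a corner. This is a standard fact about polyominoes and I would invoke it without further comment, but it is worth flagging as the one nontrivial ingredient of the argument.
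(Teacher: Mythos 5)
Your proposal is correct and follows essentially the same route as the paper's proof: both arguments pass from the connectedness of the interior of $P$ to an edge-adjacent chain of tiles, observe that edge-adjacent tiles lie in a common row or column (so their corresponding edges of $G(P)$ share a vertex), and conclude by noting that every vertex of $G(P)$ lies on some edge. The one ingredient you flag---that interior-connectedness yields a tile path using full shared edges rather than corner contacts---is exactly the step the paper also relies on (phrased there via an equivalence relation generated by edge-adjacency), so there is no gap.
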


\begin{proof}
 Consider an equivalence relation on the tiles of $P$, generated by the relation that any two tiles that touch along an edge are equivalent.  Because the interior of $P$ is connected, all tiles of $P$ are in the same equivalence class.  Then we observe that any two tiles that touch along an edge are certainly in the same row or column.  Thus, if we generate an equivalence relation by the relation that any two tiles in the same row or column are equivalent, then it is also (even more) true that all tiles of $P$ are in the same equivalence class.  
 
 The corresponding construction in $G(P)$ is to require that any two edges that share a vertex are equivalent.  In this case, the connectedness of $P$ implies that all edges in $G(P)$ are in the same equivalence class.  Because any two edges that share a vertex are in the same connected component, this implies that all edges in $G(P)$ are in the same connected component.  Every vertex in $G(P)$ is in some edge, so $G(P)$ must be connected.
 \end{proof}

\begin{lemma}\label{lemma:sizematching}
In a connected graph $G$ with a matching $M$ of size $m$ there must be at least $m-1$ edges not in the matching.
\end{lemma}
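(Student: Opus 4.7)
The plan is to prove the bound by contracting the matching and invoking the fact that a connected graph on $N$ vertices has at least $N-1$ edges. So I would first form an auxiliary graph $G'$ from $G$ by contracting each of the $m$ edges of $M$ to a single vertex. If $k \geq 0$ denotes the number of vertices of $G$ not covered by $M$, then $G'$ has exactly $m + k$ vertices.

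Next, I would observe that contracting edges preserves connectivity, so $G'$ is still connected. Since only the $m$ edges of $M$ are contracted, the edges of $G'$ correspond bijectively to the edges of $G \setminus M$ (some of which may become loops or parallel edges in $G'$, but none disappear). Therefore the number of edges of $G$ not in $M$ equals the number of edges of $G'$.

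Finally, a connected graph (even a multigraph with loops) on $N$ vertices has at least $N-1$ edges, so
\[
\lvert E(G) \setminus M \rvert \;=\; \lvert E(G') \rvert \;\geq\; (m+k) - 1 \;\geq\; m - 1,
\]
which is the desired conclusion. There is no real obstacle here; the only thing to be careful about is noting that edges of $G \setminus M$ with both endpoints in the same matched pair become loops in $G'$ but are still counted as edges, so the inequality is not affected.
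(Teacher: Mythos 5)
Your proof is correct, but it takes a detour the paper does not. The paper's argument is a two-line direct count: a matching of size $m$ forces $G$ to have at least $2m$ vertices, connectivity then gives at least $2m-1$ edges, and subtracting the $m$ matching edges leaves at least $m-1$ others. You instead contract the matching and apply the same spanning-tree lower bound (``a connected graph on $N$ vertices has at least $N-1$ edges'') to the quotient multigraph $G/M$ rather than to $G$ itself. Both arguments rest on the identical key fact, and both in fact yield the slightly stronger bound $m+k-1$ where $k$ is the number of unmatched vertices. What the contraction buys you is a cleaner bijection between the edges you are counting and the edges of an auxiliary connected graph, at the cost of having to fuss over loops and parallel edges and over why contraction preserves connectivity; the paper's version avoids all of that bookkeeping. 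Your handling of the multigraph subtleties is right (loops and parallel edges do not lower the $N-1$ bound), so there is no gap, just a longer path to the same place.
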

\begin{proof}
If $M$ is a matching of size $m$, then the number of vertices in $G$ is at least $2m$. Then, the connectivity of $G$ implies that $G$ has at least $2m-1$ edges. Thus, $G$ has at least $m-1$ edges that are not in $M$.
\end{proof}

Using these two lemmas, we are ready to prove Theorem~\ref{teo:max-non-attacking}.

\begin{theorem}\label{teo:max-non-attacking}
The maximum number of non-attacking rooks that can be placed on a polyomino with $n$ tiles is $\ceil{\frac{n}{2}}$.
\end{theorem}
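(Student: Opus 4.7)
The plan is to establish both halves: the upper bound $\mathrm{max}(P) \le \lceil n/2 \rceil$ for every $n$-tile polyomino, and the matching existence of a polyomino that attains $\lceil n/2 \rceil$.

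For the upper bound I would argue via the bipartite graph $G(P)$. A set of $m$ non-attacking rooks corresponds, by construction, to a matching of size $m$ in $G(P)$. Since tiles of $P$ are in bijection with edges of $G(P)$, the graph $G(P)$ has exactly $n$ edges. Lemma \ref{lem:connected} tells us $G(P)$ is connected, and Lemma \ref{lemma:sizematching} then forces at least $m-1$ edges outside the matching, so $n \ge 2m - 1$ and hence $m \le (n+1)/2 = \lceil n/2 \rceil$.

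For tightness I would reuse the ladder polyomino $P_n$ from Figure \ref{fig:leader}. For $n = 2m$, place a rook on each of the $m$ off-center tiles (one per domino, alternating left and right of the center column $c^*$). These $m$ tiles lie in $m$ distinct rows of $P$, one per domino. Although two off-center tiles on the same side of $c^*$ share an infinite-grid column, the tile between them on that column is not in $P$ (it is on the opposite side), so by the definition of same column of $P$ given in Section \ref{sec:nonattacking} these tiles lie in different columns of $P$. Hence the $m$ rooks are pairwise non-attacking, realizing $\lceil n/2 \rceil = m$ for even $n$. For $n = 2m + 1$, I would additionally place a rook on the extra tile at the bottom of the center column: this tile occupies its own row of $P$ (no neighbor below and only a center-column neighbor above), and its column is $c^*$, which contains no other rook. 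Thus it does not attack any of the $m$ off-center rooks, giving $m + 1 = \lceil n/2 \rceil$ non-attacking rooks.

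The one subtle step is the verification in the construction that off-center tiles in a common infinite-grid column are non-attacking in the sense of $P$; the rest is immediate from Lemmas \ref{lem:connected} and \ref{lemma:sizematching}. Care is needed only to invoke the polyomino-specific row/column definition rather than the ambient grid notion.
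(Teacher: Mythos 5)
Your proposal is correct and follows essentially the same route as the paper: the upper bound via $G(P)$ being connected (Lemma~\ref{lem:connected}) and the matching-size bound of Lemma~\ref{lemma:sizematching} giving $n \geq 2m-1$, and tightness via the ladder polyominoes of Theorem~\ref{teo:minimum-rook}. The only difference is that you spell out the verification that the $\ceil{\frac{n}{2}}$ off-center (and bottom) tiles are pairwise non-attacking under the polyomino-specific row/column definition, a detail the paper asserts without proof; your check is accurate.
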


\begin{proof}
The polyominoes in the family constructed in Theorem~\ref{teo:minimum-rook} to require $\floor{\frac{n}{2}}$ rook guards also admit $\ceil{\frac{n}{2}}$ non-attacking rooks.  To show that no polyomino admits more non-attacking rooks, let $m$ be the maximum number of non-attacking rooks on a given polyomino $P$ with $n$ tiles.  Then the bipartite graph $G(P)$ has a matching of size $m$, so by Lemma~\ref{lemma:sizematching} there are at least $m-1$ edges of $G(P)$ not in the matching.  Thus $G(P)$ has at least $2m-1$ edges.  We know that $G(P)$ has $n$ edges, so $n \geq 2m-1$ and $m \leq \ceil{\frac{n}{2}}$.
\end{proof}

In the special case with the minimum number of tiles for the number of rooks, the configuration is unique.

\begin{theorem}\label{teo:odd-unique}
If a polyomino with an odd number of tiles $n=2m-1$ admits a set of $\ceil{\frac{n}{2}} = m$ non-attacking rooks, then this set dominates and is unique.
\end{theorem}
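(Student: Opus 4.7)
The plan is to show that under these hypotheses the bipartite graph $G(P)$ from Section~\ref{sec:nonattacking} is forced to be a tree carrying a perfect matching, from which both uniqueness and the dominating property follow easily.

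First I would set up the vertex count. The polyomino has $n = 2m-1$ tiles, so $G(P)$ has exactly $2m-1$ edges. A set of $m$ non-attacking rooks corresponds to a matching $M$ of size $m$ in $G(P)$, and such a matching saturates $2m$ distinct vertices (rows and columns). Therefore $G(P)$ has at least $2m$ vertices. On the other hand, by Lemma~\ref{lem:connected} the graph $G(P)$ is connected, and a connected graph on $v$ vertices has at least $v-1$ edges; combined with the bound $2m-1$ on the edge count this forces $v \leq 2m$. Hence $v = 2m$, the inequality is tight, and $G(P)$ is a tree on $2m$ vertices in which $M$ is a perfect matching.

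Next I would invoke the classical leaf-stripping argument to deduce uniqueness of the perfect matching in a tree. Any leaf $\ell$ must be matched to its unique neighbor $u$ in any perfect matching, so the edge $\{\ell,u\}$ lies in every perfect matching. Deleting $\ell$ and $u$ leaves a forest whose components each still admit a perfect matching (the restriction of the original), and by induction on the number of vertices each such matching is unique. Hence $M$ is the only perfect matching of $G(P)$, and in particular $M$ is the only size-$m$ matching (since any matching of size $m$ must be perfect, as the graph has exactly $2m$ vertices).

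Finally I would check that $M$ dominates $P$. Because $M$ is a perfect matching, every vertex of $G(P)$ is incident to an edge of $M$. Translating back, every row and every column of $P$ contains a rook of our set, so every tile (every edge of $G(P)$) shares a row or column with some rook, i.e.\ is guarded. There is no real obstacle in the argument; the only nontrivial ingredient is the edge-count inequality that pins $G(P)$ down as a tree, after which the tree uniqueness of perfect matchings and the perfect-matching-implies-domination observation do the rest.
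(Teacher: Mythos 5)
Your proof is correct, and it reaches the same structural conclusion as the paper---that $G(P)$ is forced to be a tree on $2m$ vertices in which the rook set is a perfect matching---but it differs in both of the two remaining steps. For uniqueness, the paper takes the symmetric difference of two hypothetical size-$m$ matchings, observes that every vertex meets $0$ or $2$ of its edges so it is a union of cycles, and concludes from acyclicity of the tree that the matchings coincide; you instead run the classical leaf-stripping induction (a leaf must be matched to its unique neighbor, delete both, recurse). Both are standard and equally rigorous; the symmetric-difference argument avoids induction, while yours is perhaps more self-contained. For domination, the paper argues by contradiction using Theorem~\ref{teo:max-non-attacking}: an unguarded tile would admit an extra non-attacking rook, exceeding the maximum $\ceil{\frac{n}{2}}$. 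You instead read domination directly off the perfect matching: every row vertex and every column vertex is saturated, so every tile lies in a row or column containing a rook. Your version is arguably cleaner since it needs no appeal to the earlier theorem, though it does rely on correctly interpreting the paper's blocking-aware definition of row and column (which you do). One small point worth making explicit if you write this up: the claim that a rook in the same row (in the paper's sense) as a tile guards that tile is exactly the content of the row/column definition in Section~\ref{sec:nonattacking}, so the translation from ``perfect matching'' to ``dominating set'' is legitimate.
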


\begin{proof}
 Suppose for the sake of contradiction that the set of rooks does not dominate.  Then there is a square of the polyomino that is not attacked by a rook.  We can place an additional rook there to make a set of $m + 1$ non-attacking rooks, contradicting Theorem~\ref{teo:max-non-attacking}.
 
 For uniqueness, we show that a connected graph $G$ with $2m-1$ edges cannot have two different matchings of size $m$.  Suppose for the sake of contradiction that there are two matchings $M_1$ and $M_2$, both of size $m$.  Consider the subgraph $H$ of $G$ consisting of all edges that are either in $M_1$ but not $M_2$, or in $M_2$ but not $M_1$.  Because $G$ has only $2m$ vertices, each vertex of $G$ is incident to exactly one edge of each matching, so every vertex of $G$ is incident to either $0$ or $2$ edges of $H$.  Thus $H$ is a disjoint union of cycles and isolated points.  But we know that $G$ is a tree, because it is connected and has $2m$ vertices and $2m-1$ edges.  Thus $H$ cannot have any cycles in it, and so the two matchings are identical.
\end{proof}

For uniqueness, it is not enough that the number of non-attacking rooks be maximum given the number of tiles.  Some polyominoes with an even number of tiles $n=2m$ will have more than one set of $m$ non-attacking rooks. In Figure \ref{fig:leader} we gave an example of a polyomino with 10 tiles and with more than one set of $5$ non-attacking rooks. For instance, another four different configurations can be obtained if we select another rook to guard the center column. The same pattern gives a sequence of polyominoes with $n=2m$ tiles and more than one set of $m$ non-attacking rooks.

However, there are also some cases where a polyomino with $2m$ tiles admits only one set of $m$ non-attacking rooks.  In Figure~\ref{fig:even} we give a family of polyominoes with even number of tiles $n=2m$ and with a unique set of $m$ non-attacking rooks. 

\begin{theorem}\label{teo:even-unique}
There exists a sequence of polyominoes with $n=2m$ tiles, for $m \geq 3$, and only one set of $m$ non-attacking rooks.
\end{theorem}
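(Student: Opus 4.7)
The plan is to construct the family $\{P_m\}_{m \geq 3}$ inductively and then use the bipartite-graph framework of this section to show that $G(P_m)$ has a unique perfect matching of size $m$. By Theorem~\ref{teo:max-non-attacking}, $m$ is the maximum size of a non-attacking rook set on any polyomino with $2m$ tiles, so uniqueness of the perfect matching will translate directly into uniqueness of the $m$-rook configuration.

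For the base case I will take $P_3$ to be the right-triangular polyomino with tiles $(1,1), (2,1), (3,1), (2,2), (3,2), (3,3)$, which has $6 = 2 \cdot 3$ tiles. A direct inspection of $G(P_3)$ shows that the column $c_1$ has degree $1$ and the row $r_3$ has degree $1$, forcing the matching edges $(r_1, c_1)$ and $(r_3, c_3)$; what remains then forces $(r_2, c_2)$, yielding the unique perfect matching. For $m \geq 4$ I will define $P_m := P_{m-1} \cup \{(m-1, m), (m, m)\}$, attaching a horizontal domino whose left tile $(m-1, m)$ sits directly on top of $(m-1, m-1) \in P_{m-1}$, so connectedness is preserved and $|P_m| = 2m$.

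The inductive step will then run as follows. The two new tiles introduce a new row $r_m = \{(m-1, m), (m, m)\}$ and a new column $c_m = \{(m, m)\}$, and extend the existing column $c_{m-1}$ upward by the single tile $(m-1, m)$. Consequently, $G(P_m)$ is obtained from $G(P_{m-1})$ by adding the vertices $r_m, c_m$ together with exactly two new edges, $(r_m, c_{m-1})$ and $(r_m, c_m)$. Since $c_m$ has degree one in $G(P_m)$, every maximum matching must contain the edge $(r_m, c_m)$; deleting these two vertices returns $G(P_{m-1})$, whose unique perfect matching is supplied by the inductive hypothesis. Hence $G(P_m)$ has the unique perfect matching $\{(r_i, c_i) : 1 \leq i \leq m\}$, corresponding to the diagonal rook placement at the tiles $(i, i)$ for $1 \leq i \leq m$.

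The step I expect to require the most care is the structural bookkeeping underlying the claim that $G(P_m)$ equals $G(P_{m-1})$ with just those two vertices and two edges appended. I must verify that the new tile $(m-1, m)$ extends the column $c_{m-1}$ as a single connected vertical run rather than splitting or merging it with any other column-vertex; this is guaranteed by the fact that $(m-1, m-1) \in P_{m-1}$ for every $m \geq 4$, which follows inductively from the construction. I must also check that no pre-existing row of $P_{m-1}$ acquires a new tile in $P_m$, which holds because every earlier tile has $y$-coordinate at most $m-1$. Once these structural checks are in place, the induction closes and the uniqueness of the $m$-rook configuration on the $2m$-tile polyomino $P_m$ is immediate.
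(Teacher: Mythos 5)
Your proof is correct, but both your polyomino family and your uniqueness argument differ from the paper's. The paper builds its sequence from a particular hexomino by repeatedly attaching a vertical domino to the right (offset to create a new row), and proves uniqueness by a direct, global forcing argument: the polyomino has exactly $m$ columns, so each column carries exactly one rook; the single-tile third column forces a rook on its tile $T$, which then propagates constraints to the columns right of $T$ and to the two leftmost columns. You instead use a staircase polyomino and argue by induction on $m$ entirely inside the bipartite graph $G(P_m)$: each attachment adds one new row vertex and one new degree-one column vertex, the degree-one vertex forces its unique incident edge into any perfect matching, and deleting the two new vertices recovers $G(P_{m-1})$, so uniqueness peels off one step at a time. (Your identification of size-$m$ non-attacking rook sets with perfect matchings is justified since $P_m$ has exactly $m$ rows and $m$ columns, and your structural bookkeeping --- that the new tile $(m-1,m)$ extends column $c_{m-1}$ without creating or merging rows --- checks out because $(m-1,m-1)\in P_{m-1}$ and all old tiles have $y\le m-1$.) The two approaches are close in spirit --- both ultimately exploit forced placements at single-tile lines --- but your inductive peeling is more systematic and arguably cleaner, at the cost of requiring the explicit verification that the bipartite graph grows by exactly one pendant configuration per step; the paper's argument avoids induction but leans on an ad hoc chain of forcings specific to its shape.
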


\begin{proof}
The sequence of polyominoes with $n=2m$ tiles and only one set of $m$ non-attacking rooks is constructed as follows. The first three polyominoes, with $m=3, 4, 5$, and their respective sets of $m$ non-attacking rooks are depicted in Figure~\ref{fig:even}. For $m>3$, to recursively construct the polyomino of the sequence with $n=2(m+1)$ tiles from the polyomino with $n=2m$ tiles we add a vertical domino next to the right of the rightmost column of the polyomino, offset from the previous domino so that it creates a new row. The polyominoes constructed with this recursive algorithm have $n=2m$ tiles and $m$ columns. In order to have a set of $m$ non-attacking rooks, each column most have a rook placed on it. On each one of the polyominoes of the sequence, the third column, from left to right, has only one tile that we represent for the rest of the proof by $T$. Thus a rook must be placed on $T$. This forces the rest of the rooks that are on the columns to the right of $T$, to be placed on the tiles that are not in the same row as $T$. The tiles that are to the left of $T$ have the same configuration on each one of the polyominoes of the sequence. The leftmost column of each one of these polyominoes has only one tile that is not on the same row as $T$, thus, a rook must be placed there. This forces us to place a rook on the uppermost tile of the second column from left to right.
\end{proof}

\begin{figure}[h]
\begin{center}
\begin{tikzpicture} 
\foreach \x/\y in {  0/0, 0/1, 1/0, 1/1, 1/2, 2/1} { 
\path [draw=gray, fill=gray] (2.5+\x-0.45, .5+\y-0.45)
-- ++(0,.9)
-- ++(.9,0)
-- ++(0,-.9)
--cycle;
}

\foreach \x/\y in { 0/0, 1/2, 2/1} { 
 \node[anchor=west] at (2+\x+.05, \y+.5) {\rook};
}

\foreach \x/\y in {  0/0, 0/1, 1/0, 1/1, 1/2, 2/1, 3/1, 3/2} { 
\path [draw=gray, fill=gray] (7.5+\x-0.45, .5+\y-0.45)
-- ++(0,.9)
-- ++(.9,0)
-- ++(0,-.9)
--cycle;
}

\foreach \x/\y in { 0/0, 1/2, 2/1, 3/2} { 
 \node[anchor=west] at (\x+7.05, \y+.5) {\rook};
}

\foreach \x/\y in {  0/0, 0/1, 1/0, 1/1, 1/2, 2/1, 3/1, 3/2} { 
\path [draw=gray, fill=gray] (7.5+\x-0.45, .5+\y-0.45)
-- ++(0,.9)
-- ++(.9,0)
-- ++(0,-.9)
--cycle;
}

\foreach \x/\y in { 0/0, 1/2, 2/1, 3/2} { 
 \node[anchor=west] at (\x+7.05, \y+.5) {\rook};
}

\foreach \x/\y in {  0/0, 0/1, 1/0, 1/1, 1/2, 2/1, 3/1, 3/2, 4/0, 4/1} { 
\path [draw=gray, fill=gray] (13.5+\x-0.45, .5+\y-0.45)
-- ++(0,.9)
-- ++(.9,0)
-- ++(0,-.9)
--cycle;
}

\foreach \x/\y in { 0/0, 1/2, 2/1, 3/2, 4/0} { 
 \node[anchor=west] at (\x+13.05, \y+.5) {\rook};
}

\end{tikzpicture}
\end{center}
\caption{For $m \geq 3$, we depict in this figure the first three elements a sequence of polyominoes with $n=2m$ tiles that have a unique set of dominating non-attacking $m$ rooks. For $m=1$ and $m=2$ there does not exists a polyomino with $n$ tiles and a unique set of non-attacking $m$ rooks.}
\label{fig:even}
\end{figure}
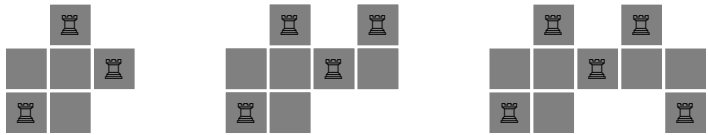

\section{Open Questions}

In~\cite{Worman07}, Worman and Keil prove that the $r$--visibility guard set problem is polynomial-time solvable when restricted to polyominoes without holes.

\begin{question}
Is the rook-visibility guard set problem for polyominoes still NP-hard if, in addition, we require that the polyomino be simply connected (i.e., not have holes)?
\end{question}

Another question is about the correspondence in Section~\ref{sec:nonattacking} between polyominoes and bipartite graphs.

\begin{question}
Which bipartite graphs can be written as $G(P)$ for some polyomino $P$?  For those that can, is there a good algorithm for reconstructing $P$?  Can we deduce anything about the number or size of holes in $P$ just from knowing $G(P)$?
\end{question}

For higher-dimensional rooks, the bipartite graph construction does not work.

\begin{question}
For $d$--dimensional rooks on $d$--polycubes, what are the analogues of the theorems in Section~\ref{sec:nonattacking}?  Can we prove similar results even without the bipartite graphs?
\end{question}

 Theorem~\ref{teo:interval} suggests that an entertaining recreational puzzle may be to give someone a polyomino $P$ and a number $k$, and to ask that person to produce a dominating set of $k$ non-attacking rooks on $P$.
 
 \begin{question}
 For which choices of $P$ and $k$ is the task of finding a dominating set of $k$ non-attacking rooks on $P$ of appropriate difficulty for a human?  What is a good method for producing such $P$ and $k$ by computer?
 \end{question}

We can also ask about random polyominoes; several probability distributions on polyominoes are given in~\cite{phdthesis}.  We know that the minimum number of rooks needed to guard a given $n$--omino ranges between $1$ (for a single-row polyomino) and $\floor{\frac{n}{2}}$ (as in Theorem~\ref{teo:minimum-rook}), and that the maximum number of non-attacking rooks ranges between $1$ and $\ceil{\frac{n}{2}}$.

\begin{question}
What is the expected value of the minimum number of rooks needed to guard a typical $n$--omino?  Is it bounded below by a linear function of $n$?  How about the maximum number of non-attacking rooks?
\end{question}

\bibliography{rooks}
\bibliographystyle{amsalpha}

\end{document}